%
\documentclass[10pt]{article}
\usepackage[utf8]{inputenc}

\usepackage{pdfpages}

\usepackage{graphicx}
\usepackage{color}
\usepackage{algorithm}
\usepackage[noend]{algorithmic}


\usepackage{amsmath,amsthm,amssymb}
\usepackage{latexsym}
\usepackage{array}
\usepackage{stmaryrd}
\usepackage{paralist}
\usepackage{graphicx} 
\usepackage{tikz,authblk}
\usepackage{xspace}
\usepackage{tabulary,booktabs,multirow}
\usepackage{algorithm, algorithmic}
\usepackage{footnote}
\usepackage{subfig}
\usepackage[shortlabels]{enumitem}
\usepackage[toc,page]{appendix}
\setlist{nolistsep}

\usetikzlibrary{positioning,arrows,automata,shapes,decorations.pathreplacing,patterns}

\newtheorem{theorem}{Theorem}[section]
\newtheorem{lemma}[theorem]{Lemma}
\newtheorem{corollary}[theorem]{Corollary}
\newtheorem{proposition}[theorem]{Proposition}

\newcommand{\etal}{\emph{et al.}\xspace}
\newcommand{\ie}{\emph{i.e.}\xspace}
\newcommand{\eg}{\emph{e.g.}\xspace}
\newcommand{\wrt}{\emph{w.r.t.}\xspace}

\newcommand{\RCPlong}{\textsc{Resiliency Checking Problem}\xspace}

 \newcommand{\RDSCPlong}{\textsc{Resiliency Disjoint Set Cover Problem}\xspace}
 \newcommand{\RDSCPshort}{\textsc{RDSCP}\xspace}

\newcommand{\res}{\mathsf{res}}
\newcommand{\Res}{R}

\newcommand{\ur}{\mathit{VR}}

\newcommand{\T}{\mathcal{T}}
\newcommand{\I}{\mathcal{I}}
\renewcommand{\S}{\mathcal{S}}
\newcommand{\F}{\mathcal{F}}
\newcommand{\C}{\mathcal{C}}

\newcommand{\N}{\mathbb{N}}
\newcommand{\Z}{\mathbb{Z}}
\newcommand{\Q}{\mathbb{Q}}
\newcommand{\R}{\mathbb{R}}

\renewcommand{\L}{\mathcal{L}}

\newcommand{\B}{\mathcal{B}}
\DeclareMathOperator{\rank}{{\rm rank}}
\newcommand{\pref}{\ensuremath{\succ}}

\usepackage{makeidx}  
\pagestyle{headings}
\begin{document}
%
%
%
%
%
\title{Parameterized Resiliency Problems\protect\footnote{Gutin's research was supported by Royal Society Wolfson Research Merit Award. Kouteck\'{y}'s research was supported by a Technion postdoctoral fellowship and
projects 17-09142S of GA \v{C}R. This paper is based on two papers \cite{cramptonAAIM11,CrGuKoWa17} published in conference proceeding of AAIM 2016 and CIAC 2017.}}
%
%
\author[1]{Jason Crampton} 
\author[1]{Gregory Gutin} 
\author[2]{Martin Kouteck{\'y}}
 \author[3]{R\'emi Watrigant}
%
%
%

\affil[1]{Royal Holloway, University of London,
Egham, Surrey, TW20 0EX, UK}
\affil[2]{Technion - Israel Institute of Technology, Haifa, Israel and Charles University, Prague, Czech Republic}
\affil[3]{Universit\'e de Lyon, CNRS, ENS de Lyon, Universit\'e Claude Bernard Lyon 1, LIP, France}


\date{}
\maketitle

\begin{abstract}
We introduce an extension of decision problems called \textit{resiliency problems}. In resiliency problems, the goal is to decide whether an instance remains positive after any (appropriately defined) perturbation has been applied to it. To tackle these kinds of problems, some of which might be of practical interest, we introduce a notion of resiliency for Integer Linear Programs (ILP) and show how to use a result of Eisenbrand and Shmonin (Math. Oper. Res., 2008) on Parametric Linear Programming to prove that \textsc{ILP Resiliency} is fixed-parameter tractable (FPT) under a certain parameterization.  

To demonstrate the utility of our result, we consider natural resiliency versions of several concrete problems, and prove that they are FPT under natural parameterizations. Our first results concern a four-variate problem which generalizes the {\sc Disjoint Set Cover} problem and which is of interest in access control. We obtain a complete parameterized complexity classification for every possible combination of the parameters. Then, we introduce and study a resiliency version of the Closest String problem, for which we extend an FPT result of Gramm et al. (Algorithmica, 2003). We also consider problems in the fields of scheduling and social choice. We believe that many other problems can be tackled by our framework.
\end{abstract}

\section{Introduction}\label{sec:intro}

Questions of integer linear programming (ILP) feasibility are usually answered by finding an integral assignment of variables $x$ satisfying $Ax \le b$. 
By Lenstra's theorem~\cite{Len83}, this problem can be solved in $O^*(f(n)) := O(f(n)L^{O(1)})$ time and space, where $f$ is a function of the number $n$ of variables only, and $L$ is the size of the ILP
(subsequent research has obtained an algorithm of the above running time with $f(n) = n^{O(n)}$ and using polynomial space~\cite{FrTa87,Ka87}).
In the language of parameterized complexity, this means that \textsc{ILP Feasibility} is fixed-parameter tractable (FPT) parameterized by the number of variables. Note that there are a number of parameterized problems for which the only (known) way to prove fixed-parameter tractability is to use Lenstra's theorem\footnote{Lenstra's theorem allows us to prove a mainly classification result, i.e. the FPT algorithm is unlikely to be efficient in practice, nevertheless Lenstra's theorem indicates that efficient FPT algorithms are a possibility, at least for subproblems of the problem under considerations.} \cite{CyFoKoLoMaPiPiSa15}. For more details on this topic, we refer the reader to \cite{CyFoKoLoMaPiPiSa15,DoFe13}.

The notion of \emph{resiliency} measures the extent to which a system can tolerate modifications to its configuration and still satisfy given criteria.
An organization might, for example, wish to know whether it will still be able to continue functioning, even if some of its staff become unavailable. In the language of decision problems, we would like to know whether an instance is still positive after \emph{any} (appropriately defined) modification.
Intuitively, the resiliency version of a problem is likely to be harder than the problem itself; a naive algorithm would consider every allowed modification of the input, and then see whether a solution exists.

In this paper we introduce a framework for dealing with resiliency problems, and study their computational complexity through the lens of fixed-parameter tractability. We define resiliency for Integer Linear Programs (ILP)
by considering a system $\cal R$ of linear inequalities whose variables are partitioned into vectors $x$ and $z$. We denote by ${\cal R}_z$ all inequalities in $\cal R$ containing only variables from $z$.
We say that $\cal R$ is $z$-resilient if for any integral assignment of variables $z$ satisfying all inequalities of ${\cal R}_z$, there is an integral assignment of variables $x$ such that all inequalities of $\cal R$ are satisfied.
We prove in Theorem \ref{thm:ILPres} that the obtained problem can be solved in FPT time for a suitable parameterization, using a result of Eisenbrand and Shmonin on \emph{Parametric Integer Linear Programming}~\cite{EisenbrandS08}.  (Note that the result of Eisenbrand and Shmonin  is an improvement of an earlier result of Kannan \cite{Kannan90}. Unfortunately, Kannan's theorem was based on an incorrect key lemma; the proof of the theorem of Eisenbrand and Shmonin does not have this problem as it uses a correct weak version of the key lemma proved by Eisenbrand and Shmonin, see, e.g., \cite{NguPak17}.)

To illustrate the fact that our approach might be useful in different situations, we apply our framework to several concrete problems.\footnote{Our approach was slightly extended in \cite{KnopKM:2017}, which allowed the authors of \cite{KnopKM:2017} to settle an old conjecture in social choice theory.} Central among them is the \RDSCPlong (\RDSCPshort) defined in Section~\ref{sec:rcp}, which is a generalization of the \textsc{Disjoint Set Cover} problem, and has practical applications in the context of access control~\cite{CrGuWa16-sacmat,LiWaTr09}, where an equivalent formulation of the problem is called the \RCPlong and given in  Section~\ref{sec:rcp}. Informally, given a set $U$ of size $n$ and a family $\cal F$ of $m$ subsets of $U$, the \RDSCPshort asks whether after removing from $\cal F$ any subfamily of size at most $s$ the following property is still satisfied: there are $d$ disjoint set covers of $U$, each of size at most $t$.

 Thus, \RDSCPshort has five natural parameters $n$, $m$, $s$, $d$ and $t$. Since the size of a non-trivial instance can be bounded by a function of $m$ only
 and since in \RCPlong $m$ is usually much larger than the other parameters~\cite{CrGuWa16-sacmat,LiWaTr09}, in
 this paper will only focus on parameters $n$, $s$, $d$ and $t$.
Our main result, obtained using Theorem \ref{thm:ILPres}, is that \RDSCPshort  is FPT when parameterized by $n$.
This, together with some additional results, allow us to determine the complexity of \RDSCPshort  (FPT, XP, W[2]-hard, para-NP-hard or para-coNP-hard) for all combinations\footnote{By definition, a problem with several parameters $p_1,\ldots, p_{\ell}$ is the problem with one parameter, the sum $p_1+\dots +p_{\ell}$.} of the four parameters.
 
We then introduce an extension of the {\sc Closest String} problem, a problem arising in computational biology.
Informally, {\sc Closest String} asks whether there exists a string that is ``sufficiently close'' to each member of a set of input strings.
We modify the problem so that the input strings may be unreliable -- due to transcription errors, for example -- and show that this resiliency version of {\sc Closest String} called {\sc Resiliency Closest String} is FPT when parameterized by the number of input strings.
Our resiliency result on {\sc Closest String} is a generalization of a result of Gramm {\em et al.} for {\sc Closest String} which was proved using Lenstra's theorem~\cite{GrNiRo03}\footnote{Although not being strictly the first problem proved to be FPT using Lenstra's theorem \cite{Seb93}, it is considered as the one which popularized this technique~\cite{CyFoKoLoMaPiPiSa15,DoFe13}.}. 

We also introduce a resiliency version of the scheduling problem of makespan minimization on unrelated machines. We prove that this version is FPT when parameterized by the number of machines, the number of job types and the total expected downtime, generalizing a result of Mnich and Wiese~\cite{MnichW14} provided the jobs processing times are upper-bounded by a number given in unary. 

Finally, we introduce a resilient swap bribery problem in the field of social choice and prove that it is FPT when parameterized by the number of candidates.
 
The remainder of the paper is structured in the following way.
Section~\ref{sec:ILPres} introduces ILP resiliency and proves that it is FPT under a certain parameterization. We then apply our framework to the previously mentioned problems. We establish the fixed-parameter tractability of \RDSCPshort parameterized by $n$ in Section~\ref{sec:rcp} and use it to provide a parameterized complexity classification of the problem. In Section \ref{sec:closeststring}, we introduce a resiliency version of {\sc Closest String} Problem and prove that it is FPT. We study resiliency versions of scheduling and social choice problems in Sections \ref{sec:scheduling} and \ref{sec:bribery}. We conclude the paper in Section~\ref{sec:dis}, where we discuss related literature.

\section{ILP resiliency}\label{sec:ILPres}
Recall that questions of ILP feasibility are typically answered by finding an integral assignment of variables $x$ satisfying $Ax \le b$. 
Let us introduce resiliency for ILP as follows. We add another set of variables $z$, which can be seen as ``resiliency variables''. We then consider the following ILP\footnote{To save space, we will always implicitly assume that integrality constraints are part of every ILP of this paper.} 
denoted by $\cal R$:
\begin{eqnarray}
Ax & \le & b  \label{eqn:var-x}\\
Cx + Dz & \le & e \label{eqn:var-x-z}\\
Fz & \le & g \label{eqn:var-z}
\end{eqnarray}
The idea is that inequalities (\ref{eqn:var-x}) and (\ref{eqn:var-x-z}) represent the intrinsic structure of the problem, among which inequalities (\ref{eqn:var-x-z}) represent how the resiliency variables modify the instance. Inequalities (\ref{eqn:var-z}), finally, represent the structure of the resiliency part.
The goal of \textsc{ILP Resiliency} is to decide whether $\cal R$ is $z$-{\em resilient}, i.e. whether for \emph{any} integral assignment of variables $z$ satisfying inequalities (\ref{eqn:var-z}), there exists an integral assignment of variables $x$ satisfying (\ref{eqn:var-x}) and (\ref{eqn:var-x-z}). 

In $\cal R$, we will assume that all entries of matrices in the left hand sides and vectors in the right hand sides are rational numbers. The dimensions of the vectors $x$ and $z$ will be denoted by $n$ and $p$, respectively, and the total number of rows in $A$ and $C$ will be denoted by $m$. Let $\kappa({\cal R}):=n+p+m$.

Our main result establishes that \textsc{ILP Resiliency} is FPT when parameterized by $\kappa({\cal R})$, provided that part of the input is given in unary. 
Our method offers a generic framework to capture many situations. Firstly, it applies to ILP, a general and powerful model for representing many combinatorial problems. Secondly, the resiliency part of each problem can be represented as a whole ILP with its own variables and constraints, instead of, say, a simple additive term. Hence, we believe that our method can be applied to many other problems, as well as many different and intricate definitions of resiliency. 

To prove our main result we will use the work of Eisenbrand and Shmonin \cite{EisenbrandS08}. A {\em polyhedron} $P$ is a set of vectors of the form $P=\{x\in \mathbb{R}^{\nu}:\ Ax\le b\}$ for some matrix $A\in \mathbb{R}^{\mu\times \nu}$ and some vector $b\in \mathbb{R}^{\mu}.$ The polyhedron is {\em rational} if $A\in \mathbb{Q}^{\mu\times \nu}$ and $b\in \mathbb{Q}^{\mu}.$
For a rational polyhedron $Q \subseteq \R^{m+p}$, define $ Q/\Z^p := \{h \in \mathbb{Q}^m :\ (h,\alpha) \in Q \textrm{ for some } \alpha \in \mathbb{Z}^p\}$.
The \textsc{Parametric Integer Linear Programming} (\textsc{PILP}) problem takes as input a rational matrix $J \in \Q^{m \times n}$ and a rational polyhedron $Q \subseteq \R^{m+p}$, and asks whether the following expression is true:
\begin{equation*}
\forall h \in Q/\Z^p \quad \exists x \in \Z^n: \quad Jx \leq h \label{eqn:es_sentence}
\end{equation*}
Eisenbrand and Shmonin \cite[Theorem 4.2]{EisenbrandS08} proved that \textsc{PILP} is solvable in polynomial time if the number of variables $n+p$ is fixed. From this result, an interesting question is whether this running time is a uniform or non-uniform polynomial algorithm~\cite{DoFe13}, and in particular for which parameters one can obtain an FPT algorithm. By looking more closely at their algorithm, one can actually obtain the following result:

\begin{theorem}
\label{thm:es_pilp}
\textsc{PILP} can be solved in time $O^*(f(n, p)\varphi^{g_1(n, p)}m^{g_2(n, p)})$, where $\varphi\ge 2$ is an upper bound on the encoding length of entries of $J$ and $f$, $g_1$ and $g_2$ are some computable functions.
\end{theorem}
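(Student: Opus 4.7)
The plan is to reopen the algorithm of Eisenbrand and Shmonin supporting their Theorem~4.2 in \cite{EisenbrandS08} and to re-derive its running time while tracking the dependence on $n,p$ separately from the dependence on $m$ and on the encoding length $\varphi$ of the entries of $J$. Their original statement only claims polynomial time for fixed $n+p$; what I want to verify is that the combinatorial blow-up is confined to a factor $f(n,p)$ multiplied by $\varphi^{g_1(n,p)}m^{g_2(n,p)}$, which is exactly what is needed to invoke the theorem in an FPT context.

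The algorithm reduces the $\forall\exists$ sentence over $Q/\mathbb{Z}^p$ to the feasibility of $Jx\le h$ over $\mathbb{Z}^n$ for each vector $h$ in a finite \emph{test set} $T\subseteq Q/\mathbb{Z}^p$; it therefore consists of (i) constructing $T$ and (ii) running a feasibility test for each of its elements. For (i), inspection of their proof shows that $T$ is obtained by enumerating integer points of norm bounded by a function of $n+p$ and $\varphi$ inside low-dimensional auxiliary polyhedra derived from $Q$ and $J$. The number of such points, and the encoding length of each, is at most $f_1(n,p)\,\varphi^{g_1(n,p)}$, while building the auxiliary polyhedra from $Q$ and $J$ requires operations on their rows and contributes a factor of the form $m^{g_2(n,p)}$. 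For (ii), feasibility of $Jx\le h$ over $\mathbb{Z}^n$ is decided by Lenstra's algorithm, with the refinements of \cite{FrTa87,Ka87}, in time $n^{O(n)}(m\varphi)^{O(1)}$, which is absorbed into the claimed bound.

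The main obstacle I expect is careful bookkeeping of how the encoding length and the number of facets of intermediate polyhedra evolve along the algorithm: projections, integer hulls and Minkowski-type sums used in \cite{EisenbrandS08} can potentially blow up both quantities. The task is to verify that each such operation multiplies the encoding length by at most a polynomial factor in $\varphi$ of degree depending only on $n+p$, and the number of facets by at most $m^{g(n,p)}$, using the quantitative forms of the lattice-geometric lemmas in their paper (in particular those governing flatness directions and bounds on integer points in low-dimensional polytopes). Once this bookkeeping is in place, assembling the contributions of steps (i) and (ii) yields the desired bound $O^*\!\bigl(f(n,p)\,\varphi^{g_1(n,p)}m^{g_2(n,p)}\bigr)$.
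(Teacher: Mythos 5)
The paper offers no written proof of this theorem: it is asserted to follow ``by looking more closely'' at the algorithm of Eisenbrand and Shmonin \cite{EisenbrandS08}, which is exactly the re-derivation you propose, and your accounting of where the three factors arise (enumeration of candidate lattice directions/points whose number and encoding length are bounded in terms of $n+p$ and $\varphi$; row-wise operations on $J$ and $Q$ contributing $m^{g_2(n,p)}$; fixed-dimension integer programming for the final feasibility tests) is consistent with the structure of their argument, as is your list of bookkeeping concerns (flatness directions, integer hulls, growth of encoding lengths). One caveat on your step (i)--(ii) decomposition: the Eisenbrand--Shmonin algorithm does not literally reduce \textsc{PILP} to finitely many Lenstra calls on right-hand sides $h$ drawn from a finite test set $T \subseteq Q/\Z^p$ --- no such finite test set exists in general, since $Q/\Z^p$ may be infinite and the set of integer-feasible right-hand sides is an up-closed set that need not be determined by finitely many elements of $Q/\Z^p$. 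Instead, the algorithm partitions the space of right-hand sides into polynomially many polyhedral cells (polynomial in $m$ and $\varphi$ with degree depending only on $n$), attaches to each cell a bounded list of candidate affine solution functions, and then decides auxiliary integer programs in the parameter variables $\alpha \in \Z^p$ by Lenstra-type methods. If your step (ii) is read in that corrected sense, the bookkeeping you describe is precisely what is needed and yields the claimed bound; as literally stated, that step would fail.
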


\noindent
\textbf{Complexity remark.} Let us first mention that \textsc{PILP} belongs to the second level of the polynomial hierarchy, and is $\Pi_2^P$-complete~\cite{EisenbrandS08}. Secondly, the polyhedron $Q$ in Theorem \ref{thm:es_pilp} can be viewed as being defined by a system $Rh + S \alpha \le t$, where $h \in \R^m$ and $\alpha \in \Z^p$. 
\begin{corollary}\label{cor:es_pilp}
If all entries of $J$ are given in unary, then \textsc{PILP} is FPT when parameterized by $(n, m, p)$.
\end{corollary}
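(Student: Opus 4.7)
The plan is to read the corollary off Theorem~\ref{thm:es_pilp} by showing that the hypothesis ``entries of $J$ are given in unary'' forces the $\varphi$ factor to be logarithmic in the input size, after which the FPT template appears immediately. Let $L$ denote the total input size of the instance. If each entry of $J$ is written in unary, then every entry has absolute value at most $L$, so, interpreting ``encoding length'' in the theorem in the standard binary sense, we obtain $\varphi \leq c\log L$ for a suitable constant $c$.

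Substituting this bound into the time estimate of Theorem~\ref{thm:es_pilp} yields a running time of the form
\[
O^*\!\left(f(n,p) \cdot (c\log L)^{g_1(n,p)} \cdot m^{g_2(n,p)}\right).
\]
Because $m$ is part of the parameter $(n,m,p)$, both $f(n,p)$ and $m^{g_2(n,p)}$ are already bounded by a function of the parameter alone. So the only factor that needs further attention is $(\log L)^{g_1(n,p)}$, which must be absorbed into a product of a function of $(n,p)$ and a polynomial in $L$.

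For this I would invoke the elementary inequality $(\log L)^k \leq h(k)\cdot L$, valid for all $L \geq 2$ and some computable function $h$ (for instance $h(k)=(k/e)^k$, obtained by maximizing $(\log L)^k/L$ via a short calculus computation). Applying this with $k=g_1(n,p)$ bounds $(\log L)^{g_1(n,p)}$ by a computable function of $(n,p)$ times a polynomial factor in $L$, which is harmlessly swallowed by the $O^*$ notation. Combining everything, the overall running time becomes $f'(n,m,p) \cdot L^{O(1)}$ for some computable $f'$, which is exactly the FPT template for parameter $(n,m,p)$.

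The argument is essentially mechanical; the only point that requires care is verifying that ``encoding length'' in Theorem~\ref{thm:es_pilp} refers to the binary encoding length, as is conventional in the ILP complexity literature. Once this is granted, the unary-input hypothesis immediately collapses $\varphi$ from a quantity potentially comparable to $L$ into one logarithmic in $L$, and this is precisely what converts the XP-style bound of Theorem~\ref{thm:es_pilp} into an FPT bound.
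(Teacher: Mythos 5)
Your proof is correct and follows essentially the same route as the paper: bound $\varphi$ by $O(\log N)$ where $N$ is the (unary-encoded, hence polynomially bounded) maximum entry, and then absorb the factor $(\log N)^{g_1(n,p)}$ into a function of the parameter times a polynomial in the input size. The only cosmetic difference is the elementary inequality used for that last step --- you derive $(\log L)^k \le h(k)\cdot L$ by a calculus argument, whereas the paper cites the bound $(\log N)^{F}\le (2F\log F)^F + N/2^F$ from the literature --- and both serve the identical purpose.
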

\begin{proof}
We may assume that there is an upper bound $N\ge 2$ on the absolute values of entries of $J$ and $N$ is given in unary. Thus, the running time of the algorithm of Theorem \ref{thm:es_pilp} is $O^*(h(n,m, p)(\log N)^{F})$, where $F=f(n,m,p)$ and $h$ is some computable function. 

It was shown in \cite{chitnisTA11} that $(\log N)^{F}\le (2F\log F)^F + N/2^F,$ which concludes the proof.
\end{proof}

We now prove the main result of our framework, which will be applied in the next sections to concrete problems.

\begin{theorem}\label{thm:ILPres}
\textsc{ILP Resiliency} is FPT when parameterized by $\kappa({\cal R})$ provided the entries of matrices $A$ and $C$ are given in unary. 
\end{theorem}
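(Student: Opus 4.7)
The plan is to reduce \textsc{ILP Resiliency} directly to \textsc{PILP} and then invoke Corollary \ref{cor:es_pilp}. The rewriting is almost immediate: we bundle the $x$-constraints together and move the $z$-dependent part of the right-hand side into the ``parametric'' slot, so that what appears on the right of $Jx \le h$ is determined by $z$.

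Concretely, I would set
\[
J := \begin{pmatrix} A \\ C \end{pmatrix} \in \Q^{m\times n},
\]
so that the conjunction of (\ref{eqn:var-x}) and (\ref{eqn:var-x-z}) is equivalent to $Jx \le h$ where
\[
h := \begin{pmatrix} b \\ e - Dz \end{pmatrix} \in \Q^m.
\]
The set of legal right-hand sides is then captured by a rational polyhedron $Q\subseteq \R^{m+p}$ defined by the following inequalities on the pair $(h,z)$: for every row $i$ coming from $A$, impose $h_i\le b_i$ and $-h_i\le -b_i$; for every row $j$ coming from $C$, impose $h_j + (Dz)_j \le e_j$ and $-h_j - (Dz)_j \le -e_j$; and finally $Fz\le g$. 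By construction,
\[
Q/\Z^p \;=\; \bigl\{\,(b,\,e-Dz)\;:\; z\in\Z^p,\; Fz\le g\,\bigr\},
\]
and the statement ``for every $h\in Q/\Z^p$ there exists $x\in\Z^n$ with $Jx\le h$'' is exactly the $z$-resiliency of $\cal R$.

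This reduction preserves the relevant dimensions: the PILP instance has $n$ variables in $x$, $p$ integer parameters in $z$, and $m$ rows in $J$, so its parameter triple equals $(n,m,p)$ and the associated sum is exactly $\kappa(\cal R)$. The entries of $J$ are the entries of $A$ and $C$, which by hypothesis are given in unary; the entries of $D$, $F$, $b$, $e$, $g$ only enter through the polyhedron $Q$, which Corollary \ref{cor:es_pilp} permits to be encoded in binary. Applying Corollary \ref{cor:es_pilp} to this instance yields the claimed FPT algorithm.

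The only real subtlety — and the step worth stating carefully rather than the actual ``difficulty'' — is making sure that $Q/\Z^p$ produces precisely the intended family of right-hand sides. In particular, one must encode the affine equalities $h_i=b_i$ and $h_j = e_j - (Dz)_j$ as pairs of inequalities so that $Q$ remains a polyhedron in the form required by Theorem \ref{thm:es_pilp}, and one must check that the existential quantifier hidden in the definition of $Q/\Z^p$ ranges exactly over integral $z$ with $Fz\le g$ (which it does, because the only constraints on $z$ inside $Q$ are $Fz\le g$ together with the equalities that pin $h$ to $z$). Once this bookkeeping is in place, the proof reduces to a one-line appeal to Corollary \ref{cor:es_pilp}.
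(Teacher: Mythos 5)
Your proposal is correct and follows essentially the same route as the paper: both reduce to \textsc{PILP} by taking $J$ to be the stacked matrix of $A$ and $C$, encoding the right-hand side $(b,\,e-Dz)$ together with $Fz\le g$ into the polyhedron $Q$, and then invoking Corollary~\ref{cor:es_pilp} after checking that the dimensions give parameter $\kappa({\cal R})$ and that only $J$ (i.e., $A$ and $C$) needs unary encoding. Your explicit rewriting of the equalities defining $Q$ as pairs of inequalities is a minor bookkeeping detail that the paper leaves implicit.
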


\begin{proof}
We will reduce \textsc{ILP Resiliency} to \textsc{Parametric Integer Linear Programming}.
Let us first define $J$ and $Q$. 
Let $h=(h^1, h^2)$ with $h^1$ and $h^2$ being $m_1$ and $m_2$ dimensional vectors, respectively. Then the polyhedron $Q$ is defined as follows: $h^1 =b, h^2 = e - D\alpha , F\alpha \leq g.$
Furthermore, $J$ is defined as: $Ax \leq h^1,\ Cx\leq h^2.$


Recall that $h^1 = b$ and $h^2 = e-D\alpha$ and $\alpha$ satisfies $F\alpha \leq g$, so for all $h \in Q/\Z^p$ there exists an integral $x$ satisfying the above if and only if for all $z$ satisfying $Fz \leq g$, there is an integral $x$ satisfying (\ref{eqn:var-x}) and (\ref{eqn:var-x-z}). Moreover, the dimension of $x$ is $n$, the integer dimension of $Q$ is $p$ and the number of inequalities of $J$ is $m_1 + m_2 = m$, so applying Corollary~\ref{cor:es_pilp} indeed yields the required FPT algorithm.
\end{proof}

\section{Resiliency Disjoint Set Cover}\label{sec:rcp}

The \textsc{Set Cover} problem is one of the classical NP-hard problems \cite{GaJo79}. Its input comprises a finite set $U$ called the \textit{universe}, a family\footnote{We use the term {\em family} as a synonym of {\em multiset}, i.e. a family may have multiple copies of the same element. The operations of {\em union}, {\em intersection} and {\em deletion} on pairs ${\cal F}, {\cal F}'$ of families are defined in the natural way using $\max\{p,p'\}$, $\min\{p,p'\}$ and $\max\{0,p-p'\},$ where $p$ and $p'$ are the numbers of copies of the same element in ${\cal F}$ and ${\cal F}'$, respectively.} $\mathcal{F}$ of $m$ subsets of $U$,  the \RDSCPshort asks whether for every subfamily ${\cal S} \subseteq \mathcal{F}$ with $|{\cal S}| \le s$, one can find ${\cal T}_1, \dots, {\cal T}_d \subseteq \mathcal{F} \setminus {\cal S}$, such that for every $i, j\in [d]$: (i) ${\cal T}_i \cap {\cal T}_j = \emptyset$ whenever $i \neq j$ (ii) $|{\cal T}_i| \le t$, and (iii) $\bigcup_{X \in {\cal T}_i} X = U$.  $\mathcal{F}$ of subsets of $U$, and an integer $t$. It asks whether there is a subfamily ${\cal T} \subseteq \mathcal{F}$ of cardinality at most $t$ such that $\cup_{X \in {\cal T}} X = U$ (such a subfamily is called a \textit{set cover}). We may assume that $t\le |U|$ since every set in a minimal set cover $C$ must have an element of $U$ not contained in any other set of $C$. 
A natural generalization of \textsc{Set Cover}  is the \textsc{Disjoint Set Cover} problem which takes an additional parameter $d$, and asks for the existence of $d$ disjoint set covers, each of cardinality at most $t$.\footnote{\textsc{Disjoint Set Cover} was previously introduced in the literature \cite{PananjadyBV15,PananjadyBV17} in a different way: find the maximum number of disjoint (arbitrary sized) set covers. However, unlike our formulation, the previously introduced one does not extend the {\sc Set Cover} problem, where the set cover is required to be of bounded size.}

In the resiliency version of \textsc{Disjoint Set Cover} studied in this paper, one is given an integer $s$, and asks whether after the removal of any subfamily ${\cal S} \subseteq \mathcal{F}$ with $|{\cal S}| \leq s$, one still can find $d$ disjoint set covers, each of size at most $t$ (and disjoint from $\cal S$). More formally, we have the following:

\begin{center}
\fbox{
\begin{minipage}[c]{.9\textwidth}
\RDSCPlong (\RDSCPshort)\\
\underline{Input:} A universe $U$, a multiset $\mathcal{F}$ of subsets of $U$, integers $s$, $d$ and $t$.\\
\underline{Question:} For every ${\cal S} \subseteq \mathcal{F}$ such that $|{\cal S}|\le s$, do there exist ${\cal T}_1, \dots, {\cal T}_d \subseteq \mathcal{F} \setminus {\cal S}$ such that for every $i,j \in [d]$, we have $|{\cal T}_i| \leq t$, $\cup_{X \in {\cal T}_i} X = U$, and ${\cal T}_i \cap {\cal T}_j = \emptyset$ for all $j \neq i$?
\end{minipage}}~\\ 
\end{center}
In this formulation and elsewhere, for an integer $p$, $[p]$ stands for the set $\{1,2,\dots ,p\}.$

The motivation for the study of \textsc{Resiliency Disjoint Set Cover} comes from a problem arising in access control.

\subsection{Application of \RDSCPshort}

Access control is an important topic in computer security, and deals with the idea of enforcing a policy that specifies which users are authorized to access a given set of resources.
Authorization policies are frequently augmented by additional policies, articulating concerns such as separation of duty and resiliency.
The \RCPlong was introduced by Li {\em et al.}~\cite{LiWaTr09} and asks whether it is always possible to form teams of users, the members of each team collectively having access to all resources, even if some users are unavailable.

Given a set of users $V$ and set of resources $R$, an \emph{authorization policy} is a relation $\ur \subseteq V \times \Res$; we say $v\in V$ is \emph{authorized} for resource $r$ if $(v,r) \in \ur$.
Given an authorization policy $\ur \subseteq V \times \Res$, an instance of the \RCPlong is defined by a resiliency policy $\res(P, s, d, t)$, where $P \subseteq \Res$, $s \ge 0$, $d \ge 1$ and $t \ge 1$. We say that $\ur$ \emph{satisfies} $\res(P, s, d, t)$ if and only if for every subset $S \subseteq V$ of at most $s$ users, there exist $d$ pairwise disjoint subsets of users $V_1, \dots, V_d$ such that for all $i \in [d]$:
\begin{align}
&V_i \cap S = \emptyset, \label{def:cond:intersect} \\
&|V_i| \le t \text{ and } \bigcup_{v \in V_i} \{r \in \Res \text{ s.t. } (v, r) \in \ur \} \supseteq P. \label{def:cond:size}
\end{align}
In other words, $\ur$ satisfies $\res(P,s,d,t)$ if we can find $d$ disjoint groups of users, even if up to $s$ users are unavailable, such that each group contains no more than $t$ users and the users in each group are collectively authorized for the resources in $P$.

Observe that there is an immediate reduction from \RCPlong to \textsc{Resiliency Disjoint Set Cover}: the elements of the universe are the resources, and the sets are in one-to-one correspondence with the users, \ie a set contains all resources a given user has access to.

Observe that \RDSCPshort has five natural parameters: $n=|U|$, $m=|\mathcal{F}|$, $s$, $d$, $t$. As explained earlier, we will only focus on parameters $n, s, d$ and $t$. This choice is also motivated by our application, since it is frequently assumed that the number of users of an organization is usually much larger than the set of resources \cite{LiWaTr09}.

\subsection{Parameterized Complexity Classification of   \RDSCPshort}

As said before, \RDSCPshort contains several natural parameters, namely $n$, $s$, $d$ and $t$. In the next two subsections we prove the results leading to the classification of parameterized complexity of \RDSCPshort shown in Fig. \ref{fig:lattices}. An arrow $A \longrightarrow B$ means that $A$ is a larger parameter than $B$, in the sense that the existence of an FPT algorithm parameterized by $B$ implies the existence of an FPT algorithm parameterized by $A$, and, conversely, any negative result parameterized by $A$ implies the same negative result parameterized by $B$.

The FPT results follow from Theorem \ref{thm:fpt-param-by-p-only} which is
 proved in Section \ref{sec:fpt-part}. The remaining results are obtained in Section \ref{sec:other}.

\begin{figure}[!h]
  \begin{center}
    
\begin{tikzpicture}%
  [fpt/.style={rectangle,draw,minimum width=1cm},%
   xp/.style={rectangle,draw,pattern=north west lines,pattern color=black!60,minimum width=1cm},%
   paranphard/.style={rectangle,draw,fill=black!30,minimum width=1cm},%
   unknown/.style={circle,draw}]
    
   \node[fpt, minimum width=12pt,label=right:FPT] (legend-fpt) at (0,-6) {};
  \node[xp, minimum width=12pt,label=right:{W[2]-hard but XP}] (legend-xp) at (2,-6) {};
  \node[paranphard, minimum width=12pt,label=right:para-(co)NP-hard] (legend-para) at (6,-6) {};

  \node[fpt] (psdt) at (3.75,1) {$n,s,d,t$};

  \node[fpt] at (1,-1) (pst) {$n,s,t$};
  \node[fpt] at (2.75,-1) (psd) {$n,s,d$};
  \node[fpt] at (4.5,-1) (pdt) {$n,d,t$};
  \node[xp] at (6.25,-1) (sdt) {$s,d,t$};

  \node[fpt] (pt) at (0.5,-3) (pt) {$n,t$};
  \node[fpt] at (1.8,-3) (ps) {$n,s$};
  \node[fpt] at (3.1,-3) (pd) {$n,d$};
  \node[paranphard] (sd) at (4.4,-3) {$s,d$};
  \node[paranphard] (st) at (5.7,-3) {$s,t$};
  \node[paranphard] (dt) at (7,-3) {$d,t$};

  \node[fpt] (p) at (1.5,-5) (p) {$n$};
  \node[paranphard] (s) at (3,-5) (s) {$s$};
  \node[paranphard] (t) at (4.5,-5) (t) {$t$};
  \node[paranphard] (d) at (6,-5) (d) {$d$};
  \draw[->] (psdt) -- (psd);
  \draw[->] (psdt) -- (pst);
  \draw[->] (psdt) -- (pdt);
  \draw[->] (psdt) -- (sdt);
  \draw[->] (pdt) -- (pd);
  \draw[->] (pdt) -- (pt);
  \draw[->] (pdt) -- (dt);
  \draw[->] (pst) -- (ps);
  \draw[->] (pst) -- (pt);
  \draw[->] (pst) -- (st);
  \draw[->] (psd) -- (ps);
  \draw[->] (psd) -- (pd);
  \draw[->] (psd) -- (sd);
  \draw[->] (sdt) -- (sd);
  \draw[->] (sdt) -- (st);
  \draw[->] (sdt) -- (dt);
  \draw[->] (pd) -- (p);
  \draw[->] (pd) -- (d);
  \draw[->] (dt) -- (d);
  \draw[->] (dt) -- (t);
  \draw[->] (pt) -- (p);
  \draw[->] (pt) -- (t);
  \draw[->] (ps) -- (p);
  \draw[->] (ps) -- (s);
  \draw[->] (pd) -- (p);
  \draw[->] (pd) -- (d);
  \draw[->] (sd) -- (s);
  \draw[->] (sd) -- (d);
  \draw[->] (st) -- (s);
  \draw[->] (st) -- (t);
  \draw[->] (dt) -- (d);
  \draw[->] (dt) -- (t);
  
 \end{tikzpicture}
 
    \caption{Schema of the parameterized complexity of  \RDSCPshort.}
    \label{fig:lattices}
  \end{center}
\end{figure}
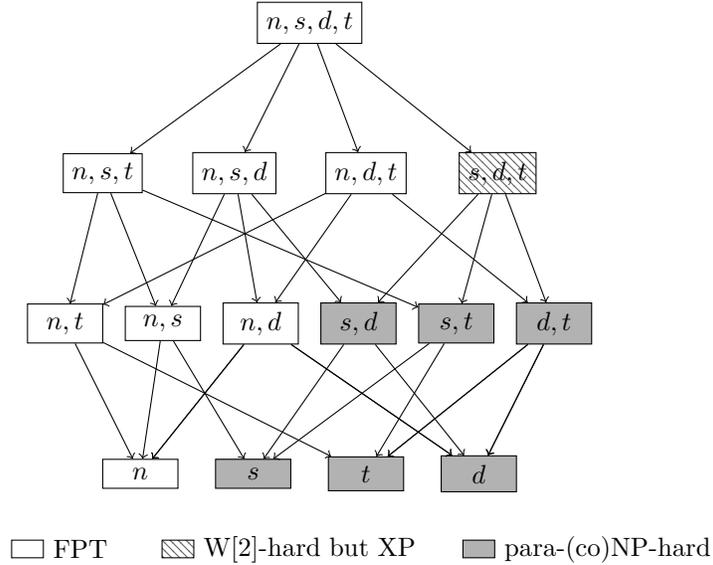

\subsubsection{Fixed-Parameter Tractability of \RDSCPshort}\label{sec:fpt-part}
 
In this subsection we prove that \RDSCPshort is FPT parameterized by $n$.
We first introduce some notation.
In the following, $U, \mathcal{F}, s, d, t$ will denote an input of \RDSCPshort, as defined previously.
For all $N \subseteq U$, let ${\cal F}_N = \{X \in {\cal F}: X = N\}$ (notice that we may have $\F_N = \emptyset$ for some $N \subseteq U$).

Roughly speaking, the idea is that in order to construct several disjoint set covers, it is sufficient to know how many sets were picked from $\F_N$, for every $N \subseteq U$ (observe that we may assume that a single set cover does not contain more than one set from each $\F_N$).
We first define the set of all possible set covers of the instance:
\[
\C = \left\{\{N_1, \dots, N_b\} : b \le t, N_i \subseteq U, i \in [b], \bigcup_{i=1}^b N_i = U\right\}.  
\]
Then, for any $N \subseteq U$, we denote the set of set covers involving $N$ by $\C_N$, i.e.
\[ \C_N = \{c = \{N_1, \dots, N_{b_c}\} \in \C : N=N_i \text{ for some } i \in [b_c]\}.\] 
Observe that since we assume $t \le n$, we have $|\C| = O(2^{n^2})$. 

We define an ILP $\L$ over the set of variables $x = (x_c : c \in \C)$ and $z = (z_N : N \subseteq U)$, with the following inequalities:
\begin{eqnarray}
 & \sum_{c \in \C} x_c \ge d & \label{eqn:sumd} \\
 & \sum_{N \subseteq U} z_N \le s & \label{eqn:bound-s}\\
 & \sum_{c \in \C_N} x_c \le |\F_N|-z_N & \text{ for every } N \subseteq U \label{eqn:sizes} \\
 & 0 \le z_N \le |\F_N| & \text{ for every } N \subseteq U \label{eqn:bounds-z} \\
 & 0 \le x_c \le d & \text{ for every } c \in \C \label{eqn:bounds-x} 
\end{eqnarray}
Observe that $\kappa(\L)$ is upper bounded by a function of $n$ only.  The idea behind this model is to represent a family $\S$ of at most $s$ sets by variables $z$ (by deciding how many sets to take for each family $\F_N$, $N \subseteq U$), and to represent the disjoint set covers by variables $x$ (by deciding how many set covers will be equal to $c \in \C$). Then, inequalities (\ref{eqn:sizes}) will ensure that the set covers do not intersect with the chosen family $\S$. 
However, while we would be able to solve $\L$ in FPT time parameterized by $n$ by using, \eg, Lenstra's ILP Theorem, the reader might realize that doing so would not solve \RDSCPshort directly. Nevertheless, the following result establishes the crucial link between this system and our problem.
\begin{lemma}\label{lemma:equivsystem}
The \RDSCPshort instance is positive if and only if $\L$ is $z$-resilient.
\end{lemma}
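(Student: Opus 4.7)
The plan is to prove both directions by establishing a natural correspondence between combinatorial objects (the removed subfamily $\S$ and the disjoint set covers $\T_1,\dots,\T_d$) and integer assignments of the ILP $\L$, where $z_N$ encodes how many copies of subsets equal to $N$ are removed, and $x_c$ encodes how many of the disjoint set covers are of ``type'' $c \in \C$ (viewing each set cover as an element of $\C$, i.e., a set of subsets of $U$ whose union is $U$).

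For the forward direction, assume the \RDSCPshort instance is positive and let $z$ be any integral assignment satisfying $\L_z$, i.e., satisfying (\ref{eqn:bound-s}) and (\ref{eqn:bounds-z}). Since $z_N \le |\F_N|$, I can choose some $\S_N \subseteq \F_N$ with $|\S_N| = z_N$ for each $N \subseteq U$, and let $\S = \bigcup_N \S_N$; this gives $|\S| = \sum_N z_N \le s$. By assumption, there exist pairwise disjoint $\T_1,\dots,\T_d \subseteq \F \setminus \S$, each a set cover of size at most $t$. I would then replace each $\T_i$ by a minimal subcover, so that no type appears twice in a single $\T_i$, making its type $c(\T_i) := \{N \subseteq U : \F_N \cap \T_i \neq \emptyset\}$ a well-defined element of $\C$. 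Setting $x_c := |\{i : c(\T_i)=c\}|$ gives nonnegative integers summing to $d$, so (\ref{eqn:sumd}) and (\ref{eqn:bounds-x}) hold. For (\ref{eqn:sizes}), the quantity $\sum_{c \in \C_N} x_c$ counts the number of $\T_i$'s that use a set of type $N$, which by minimality equals the total number of sets of type $N$ used across all $\T_i$; by disjointness this is at most $|\F_N \setminus \S| = |\F_N| - z_N$.

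For the backward direction, assume $\L$ is $z$-resilient and let $\S \subseteq \F$ with $|\S|\le s$. Define $z_N := |\F_N \cap \S|$; this clearly satisfies (\ref{eqn:bound-s}) and (\ref{eqn:bounds-z}), so by $z$-resiliency there exists an integral $x$ satisfying (\ref{eqn:sumd}), (\ref{eqn:sizes}), and (\ref{eqn:bounds-x}). I would then reconstruct the disjoint set covers as follows: for each $c \in \C$, create $x_c$ ``slots'', each requiring one distinct set of type $N$ for every $N \in c$. Since (\ref{eqn:sizes}) guarantees $\sum_{c \in \C_N} x_c \le |\F_N|-z_N = |\F_N \setminus \S|$ for each $N$, a greedy assignment of the available sets in $\F_N \setminus \S$ to the slots needing type $N$ succeeds, producing $\sum_c x_c \ge d$ pairwise disjoint set covers in $\F \setminus \S$, each of size at most $t$ (since $|c|\le t$ for every $c \in \C$). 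Keeping any $d$ of them proves the instance positive.

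The proof is essentially bookkeeping, but the main subtlety lies in the forward direction: because $\C$ consists of \emph{sets} of subsets rather than multisets, the type $c(\T_i)$ must genuinely be an element of $\C$, and constraint (\ref{eqn:sizes}) must correctly count used sets per type. Replacing each $\T_i$ by a minimal subcover resolves this, since minimization does not grow the sets, preserves the covering property, and preserves disjointness with the other $\T_j$; this is the one step requiring care to get right.
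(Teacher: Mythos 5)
Your proposal is correct and follows essentially the same route as the paper's proof: both directions translate between the combinatorial objects $(\S,\T_1,\dots,\T_d)$ and the assignments $(z,x)$ via exactly the same correspondence, with your ``minimal subcover'' step being an explicit justification of the paper's ``we may assume w.l.o.g.\ that $|\T_i\cap\F_N|\le 1$'' and your greedy slot-filling argument matching the paper's reconstruction of the covers from inequalities (\ref{eqn:sizes}). No substantive difference.
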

\begin{proof}
Let us denote by $\L_z$ the ILP consisting only of inequalities involving variables $z$, \ie inequalities (\ref{eqn:bound-s}) and (\ref{eqn:bounds-z}).
Suppose first that the instance is positive (\ie, there exists $d$ disjoint set covers, each of size at most $t$ and disjoint from $\S$, for any $\S \subseteq \F$ of size at most $s$), and let $\sigma_z$ be an integral assignment for $z$ such that $\sigma_z$ satisfies $\L_z$.

We now define a family $\S \subseteq \F$ by picking, in an arbitrary manner, $\sigma_z(z_N)$ sets in $\F_N$, for each $N \subseteq U$ (since $\sigma_z(z_N) \le \min \{s, |\F_N|\}$, such a set $\S$ must exist). Since $\S$ is a family of at most $s$ sets, there exist $d$ disjoint set covers $V=\{\T_1, \dots, \T_d\}$ such that $\left(\bigcup_{i \in [d]} \T_i \right) \cap \S = \emptyset$. 
Then, for each $c \in \C$, let $\sigma_x(x_c)$ be the number of set covers of $U$ being equal to $c$. 
Clearly we have $\sigma_x(x_c) \in \{0, \dots, d\}$ and $\sum_{c \in \C} \sigma_x(x_c) = d$, and thus inequalities (\ref{eqn:sumd}) and (\ref{eqn:bounds-x}) are satisfied. 
Then, for all $N \subseteq U$, we may assume w.l.o.g. that $|\T_i \cap \F_N| \le 1$ for all $i \in \{1, \dots, d\}$. Hence $\sum_{c \in \C_N} \sigma_x(x_c)$ equals $|\left( \bigcup_{i \in [d]} \T_i \right) \cap \F_N|$, which is the number of sets of $\F_N$ involved in some set covers of $U$. 
Since $\left( \bigcup_{i \in [d]} \T_i \right) \cap \S = \emptyset$, we have $\sum_{c \in \C_N} \sigma_x(x_c) \le |\F_N| - \sigma_z(z_N)$, and thus inequalities (\ref{eqn:sizes}) are also satisfied for every $N \subseteq U$. Consequently, $\sigma_x \cup \sigma_z$ satisfies $\L$.

Conversely, let $\S \subseteq \F$, $|\S| \le s$. For each $N \subseteq U$, define $\sigma_z(z_N) = |\S \cap \F_N|$, which is thus an integral assignment of variables $z$ satisfying $\L_Z$. 
Hence, there exists a valid assignment $\sigma_x$ such that $\sigma_z \cup \sigma_x \models \L$. Then, for $c=\{N_1, \dots, N_b\} \in \C$, $b \le t$, consider a family of sets $\T$ consisting of a set chosen arbitrarily in $\S_{N_i}$ for each $i \in [b]$. By definition of $\C$, $\T$ is a set cover of size at most $t$. Then, since for all $N \subseteq U$, we have, by inequalities (\ref{eqn:sizes}), that it is possible to construct $\sigma_x(x_c)$ pairwise disjoint such sets for each $c \in \C$, each having an empty intersection with $\S$. In other words, for every $\S \subseteq \F$, $|\S| \le s$, there exist $d$ disjoint set covers (thanks to inequality (\ref{eqn:sumd})), each of size at most $t$ and not intersecting $\S$. In other words, the instance is positive. 
\end{proof}

Since, as we observed earlier, $\kappa(\L)$ is bounded by a function of $n$ only, combining Lemma \ref{lemma:equivsystem} with Theorem~\ref{thm:ILPres}, we obtain the following:

\begin{theorem}\label{thm:fpt-param-by-p-only}
\RDSCPshort is FPT parameterized by $n$.
\end{theorem}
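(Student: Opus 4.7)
The plan is a direct application of Theorem~\ref{thm:ILPres} to the ILP $\L$ constructed just above, with Lemma~\ref{lemma:equivsystem} providing the translation back to \RDSCPshort. To invoke Theorem~\ref{thm:ILPres}, I need to verify two things: that $\kappa(\L)$ is bounded by a computable function of $n$, and that the $x$-coefficient matrices $A$ and $C$ have entries given in unary.

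For the first check, I would count each ingredient of $\L$ separately. The variables $x_c$ are indexed by $\C$, and because every element of $\C$ is a family of at most $t \le n$ subsets of an $n$-element universe, $|\C| \le (2^n)^n = 2^{n^2}$. The $z$-variables are indexed by subsets of $U$, so there are $2^n$ of them, and the total number of constraints in (\ref{eqn:sumd})--(\ref{eqn:bounds-x}) is $O(2^{n^2})$. Hence $\kappa(\L)$ is bounded by a function of $n$ alone. For the second check, I would align (\ref{eqn:sumd})--(\ref{eqn:bounds-x}) with the template (\ref{eqn:var-x})--(\ref{eqn:var-z}): the block $Ax \le b$ collects (\ref{eqn:sumd}) together with the box constraints (\ref{eqn:bounds-x}), the block $Cx + Dz \le e$ collects (\ref{eqn:sizes}), and $Fz \le g$ collects (\ref{eqn:bound-s}) and (\ref{eqn:bounds-z}). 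Reading off the coefficients shows that every entry of $A$ and of $C$ lies in $\{-1,0,1\}$, so the unary-encoding hypothesis of Theorem~\ref{thm:ILPres} is trivially satisfied; numbers that may be large, such as $|\F_N|$ or $d$, appear only on the right-hand sides $b$, $e$, $g$, which are not subject to any unary restriction.

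With these two points in place, Theorem~\ref{thm:ILPres} decides $z$-resiliency of $\L$ in time bounded by a function of $\kappa(\L)$, and hence of $n$, times a polynomial in the input size, and Lemma~\ref{lemma:equivsystem} identifies this with the answer to the original \RDSCPshort instance. The main substance of the argument sits in Lemma~\ref{lemma:equivsystem} and Theorem~\ref{thm:ILPres}; the only mild care needed here is to use the bound $t \le n$ (justified at the start of Section~\ref{sec:rcp}) when estimating $|\C|$, and to make sure that the potentially large numerical data of the input are parked on the right-hand sides of $\L$, so I do not anticipate any real obstacle.
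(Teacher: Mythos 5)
Your proposal is correct and follows exactly the paper's route: combine Lemma~\ref{lemma:equivsystem} with Theorem~\ref{thm:ILPres} after noting that $\kappa(\L)$ is bounded by a function of $n$. Your explicit verification that the $x$-coefficient blocks $A$ and $C$ have entries in $\{-1,0,1\}$ (so the unary hypothesis of Theorem~\ref{thm:ILPres} holds, with the large quantities $|\F_N|$, $d$, $s$ confined to the right-hand sides) is a detail the paper leaves implicit, but it is the same argument.
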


\subsubsection{Other Results of  \RDSCPshort}\label{sec:other}
First observe that \RDSCPshort with $s=0$ and $d=1$ is exactly the classical \textsc{Set Cover} problem, which is NP-hard and even W[2]-hard parameterized by the size of the set cover (\ie $t$ in our case). This explains the W[2]-hardness with parameters $(s, d, t)$ and para-NP-hardness with parameters $(s, d)$.

Then, let us start from the following simple result. 
\begin{proposition}
\RDSCPshort is in XP when parameterized by $(s, d, t)$.
\end{proposition}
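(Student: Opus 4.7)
\medskip

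\noindent\textbf{Proof proposal.} The plan is a two-level brute-force enumeration that exploits the fact that both the adversarial deletion $\mathcal{S}$ and every cover $\mathcal{T}_i$ have size bounded by the parameters.

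First I would iterate over all candidate deletions $\mathcal{S}\subseteq \mathcal{F}$ with $|\mathcal{S}|\le s$; the number of such subfamilies is $\sum_{i=0}^{s}\binom{m}{i}=O(m^{s})$. The answer to the instance is \emph{yes} if and only if, for every such $\mathcal{S}$, the residual family $\mathcal{F}\setminus \mathcal{S}$ contains $d$ pairwise disjoint set covers of $U$, each of size at most $t$. If any single $\mathcal{S}$ fails, the algorithm outputs \emph{no}; otherwise it outputs \emph{yes}.

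For the inner check, fix $\mathcal{S}$ and enumerate all $d$-tuples $(\mathcal{T}_1,\dots,\mathcal{T}_d)$ with $\mathcal{T}_i\subseteq \mathcal{F}\setminus \mathcal{S}$ and $|\mathcal{T}_i|\le t$. Each $\mathcal{T}_i$ can be chosen in at most $\sum_{j=0}^{t}\binom{m}{j}=O(m^{t})$ ways, so the number of tuples is $O(m^{dt})$. For each tuple I check in $\mathrm{poly}(n,m)$ time that the $\mathcal{T}_i$'s are pairwise disjoint and that each of them covers $U$; the inner check succeeds as soon as one such tuple passes both tests.

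Combining the two loops yields total running time $O(m^{s+dt}\cdot \mathrm{poly}(n,m))$. Because the exponent of $m$ depends only on $s$, $d$, and $t$, this places \RDSCPshort in XP parameterized by $(s,d,t)$. There is no real obstacle: the key observation is simply that a witness for ``\RDSCPshort is \emph{no}'' consists of at most $s+dt$ sets from $\mathcal{F}$ (together with a partition of the $dt$ sets into $d$ blocks), an object whose size is bounded by a function of the parameters, so an exhaustive search over all such objects is an XP algorithm.
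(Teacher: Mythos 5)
Your proposal is correct and matches the paper's proof: both enumerate all deletions $\mathcal{S}$ of size at most $s$ and then all $d$-tuples of candidate covers of size at most $t$, checking disjointness and coverage in polynomial time, for a total of $m^{O(s+dt)}\cdot\mathrm{poly}(n,m)$ time. (Your closing remark about a ``witness for \emph{no}'' is slightly imprecise, since refuting resiliency requires an $\mathcal{S}$ for which \emph{no} tuple works rather than a small certificate, but this does not affect the algorithm or its analysis.)
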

\begin{proof}
There are at most ${m \choose s}$ choices for $\S$ of size $s$ and for each such a choice there are at most $$d{m-s \choose t}$$ choices for $\T_1,\dots ,\T_d$, with $|\T_i| \leq t$ for all $i \in [d]$. For each such a choice we can decide whether $\T_1, \dots, \T_d$ are disjoint and whether each of them is a set cover in polynomial time. The result follows.
\end{proof}
Despite its simplicity, this result is actually somewhat tight as, as we will see, any smaller parameterization leads to a para-NP-hard or para-coNP-hard problem.

We now show that the problem is coNP-hard when $d=1$ and $ t=\tau$  for every fixed $\tau \ge 3$, implying para-coNP-hardness of  \RDSCPshort parameterized by $(d, t)$. 

\begin{theorem}\label{thm:paranphard-hittingset}
If $d=1$ and $t=\tau$, \RDSCPshort  is coNP-hard for every fixed $\tau \ge 3$.
\end{theorem}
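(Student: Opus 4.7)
The plan is to establish coNP-hardness by a polynomial reduction from $\tau$-\textsc{Hitting Set} (\ie \textsc{Hitting Set} restricted to $\tau$-uniform hypergraphs), which is NP-hard for every $\tau \ge 2$. The complement of \RDSCPshort with $d = 1$ and $t = \tau$ asks whether there exists a subfamily $\S \subseteq \F$ with $|\S| \le s$ such that $\F \setminus \S$ contains no set cover of $U$ of size at most $\tau$; equivalently, we want to remove $s$ sets from $\F$ so that every $\tau$-subfamily of the remainder fails to cover $U$. Showing this interdiction problem NP-hard suffices.

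Given a $\tau$-uniform hypergraph $\mathcal{H}$ on a vertex set $V$ together with an integer $k$, I would build $(U,\F,s)$ with $s = k$ and one set $S_v$ per vertex $v \in V$ (so $|\F|=|V|$), arranged so that the $\tau$-element set covers of $U$ correspond \emph{exactly} to the hyperedges of $\mathcal{H}$. The key gadget is a ``complement-indicator'' element: for every $B \subseteq V$ with $1 \le |B| \le \tau$ and $B \notin \mathcal{H}$, introduce an element $e_B \in U$ satisfying $e_B \in S_v$ iff $v \notin B$. Then for any nonempty $C \subseteq V$ with $|C| \le \tau$, the subfamily $\{S_v : v \in C\}$ covers $e_B$ iff $C \not\subseteq B$. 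Taking $B = C$ (permitted exactly when $|C| < \tau$ or $C \notin \mathcal{H}$) shows that no subfamily of size strictly less than $\tau$ covers $U$, and that no non-hyperedge $\tau$-subfamily covers $U$. Conversely, if $C$ is a hyperedge with $|C| = \tau$, then every element $e_B$ present in $U$ has $B \neq C$ with $|B| \le \tau = |C|$, which forces $C \not\subseteq B$, so $e_B$ is covered, and hence $\{S_v : v \in C\}$ covers all of $U$. Identifying each $T \subseteq V$ with the subfamily $\{S_v : v \in T\} \subseteq \F$, the constructed instance of \RDSCPshort is therefore negative iff $\mathcal{H}$ admits a hitting set of size at most $k$. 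Since $|U| = O(|V|^\tau)$, the reduction is polynomial for fixed $\tau$.

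The main obstacle is to ensure that no unintended covers creep into the construction: both subfamilies of size less than $\tau$ and $\tau$-subfamilies indexed by non-hyperedges must be blocked, while every hyperedge subfamily must still cover $U$. The uniform rule ``$e_B \in S_v$ iff $v \notin B$'' dispatches both failure modes at once, since $e_B$ is visible to precisely the sets indexed outside $B$, so a subfamily ``fits inside'' $B$ if and only if it misses $e_B$.
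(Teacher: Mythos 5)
Your proof is correct, but it takes a genuinely different route from the paper's. The paper reduces from $(\tau-1)$-uniform \textsc{Hitting Set} and builds a two-sorted gadget: one set per vertex, one set per hyperedge, and a universe split into three parts (a special element $u^*$ forcing one hyperedge-set into every cover, blocks $P^j$ forcing the accompanying vertex-sets to be exactly the vertices of $S_j$, and elements indexed by the $(\tau-2)$-subsets of the vertex set forcing at least $\tau-1$ vertex-sets), so that the covers of size at most $\tau$ are precisely a hyperedge-set together with the vertex-sets of its $\tau-1$ vertices; it then needs a short w.l.o.g.\ argument that a blocking family never profits from removing a hyperedge-set rather than a vertex-set. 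Your construction is more uniform: one set $S_v$ per vertex and one universe element $e_B$ for each non-hyperedge $B$ of size at most $\tau$, with the complementation rule $e_B \in S_v$ iff $v \notin B$, which makes a subfamily miss $e_B$ exactly when it fits inside $B$; this kills all covers of size less than $\tau$ and all non-hyperedge $\tau$-covers in one stroke, and blocking families are automatically vertex sets, so no cleanup step is needed. Your version even yields the statement for all $\tau \ge 2$, slightly stronger than claimed (the paper's construction needs $\tau \ge 3$ because it spends one slot of each cover on the hyperedge-set). The only cost is a universe of size $\Theta(|V|^\tau)$ instead of the paper's smaller gadget, which is still polynomial for fixed $\tau$, as you note.
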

\begin{proof}
We reduce from the \textsc{$\delta$-Hitting Set} problem, in which we are given a ground set $V=\{v_1, \dots, v_n\}$, a set $S = \{S_1, \dots, S_m\}$ with $S_j \subseteq V$ and $|S_j|=\delta$ for all $j \in [m]$ and an integer $k$, and where the goal is to decide whether there is a set $C \subseteq V$ of size at most $k$ and such that $C \cap S_j \neq \emptyset$ for all $j \in [m]$. This problem is known to be NP-hard for every $\delta \ge 2$ \cite{GaJo79}. 

Hence, let $\I = (V, S, k)$ be an instance of \textsc{$\delta$-Hitting Set} defined as above. For every $j \in [m]$, fix an arbitrary ordering of $S_j$, which can thus be seen as a tuple $(v_{i_1}, \dots, v_{i_{\delta}})$, allowing us to define $S_j[x] = v_{i_x}$ for all $x \in [\delta]$. 

We now define an instance $\I'$ of \RDSCPshort with universe $U$ and family $\F$ of sets. The universe consists of three parts: $U = U^V \cup U^S \cup \{u^*\}$, where $U^S = \bigcup_{j=1}^m P^j$ with $P^j = \{p_1^j, \dots, p_{\delta}^j\}$ for every $j \in [m]$, and $U^V$ contains one elements $u^V_Q$ for every subset $Q$ of $\delta-1$ elements among $[n]$. The family $\F$ of sets consists of two parts: $\F = \F^V \cup \F^S$, where $\F^V = \{s_1^V, \dots, s_n^V\}$ and $\F^S = \{s^S_1, \dots, s^S_m\}$. We now define the elements each set is made of. For every $i \in [n]$, the set $s^V_i$ consists of $\{p^j_x : j \in [m], x \in [\delta]$ such that $S_j[x] = v_i\}$ together with all elements $u^V_Q$ such that $i \notin Q$. For all $j \in [m]$, $s^S_j$ consists of $u^*$ together with $U^S \setminus P^j$. To conclude the construction of $\I'$, we let $t = \delta +1$, $d=1$, and $s=k$. Clearly this reduction can be done in polynomial time.

The remainder consists in proving that every set cover of $I'$ of size at most $t=\delta+1$ is of the form $\T_j=\{s^V_{i_1}, \dots, s^V_{i_{\delta}}, s^S_j\}$ such that $S_j = \{v_{i_1}, \dots, v_{i_{\delta}}\}$. 
If this is true, then observe that since, for every $j \in [m]$, the set $s^S_j$ only belongs to the set cover $\T_j$, we will be able to suppose \emph{w.l.o.g.} that it does not belong to any set that would intersect every set cover. Thus the set of set covers of $I'$ will be in one-to-one correspondence with the sets in $I$, implying that the obtained instance contains a set of size $s=k$ intersecting all set covers if and only if there is a hitting set of size at most $k$.

Let $\T \subseteq \S$ of size at most $t$. By construction, we need at least $\delta$ sets from $\F^V$ to be able to include all elements from $U^V$ (indeed, every set $B$ of $\delta-1$ sets of $\F^V$ corresponds to a subset $Q_B$ of $[n]$, and thus $B$ is only able to cover $U^V \setminus \{u^V_{Q_B}\}$), and we also need at least one set from $\F^S$ to contain $u^*$. Hence, $|\T \cap \F^V| = \delta$ and $\T \cap \F^S = \{s^S_j\}$ for some $j \in [m]$. Now, notice that $s^S_j$ contains all elements in $U^S$ but $P^j$, which implies that $\T \cap \S^V$ must contain $P^j$. However, this can only happen if $\T \cap \S^V = \{s^V_{i_1}, \dots, s^V_{i_{\delta}}\}$, where $S_j = \{v_{i_1}, \dots, v_{i_{\delta}}\}$, concluding the proof.
\end{proof}

\newcommand{\TDM}{\textsc{$3$-Dimensional Matching}\xspace}
We also settle the case of  \RDSCPshort parameterized by $(s, t)$. 

\begin{theorem}\label{thm:paranphard-matching}
If $s=0$ and $t=4$, \RDSCPshort is NP-hard.
\end{theorem}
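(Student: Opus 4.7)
My plan is to reduce from \TDM (3DM): given three pairwise disjoint sets $X, Y, Z$ with $|X|=|Y|=|Z|=q$ and a set $T \subseteq X \times Y \times Z$ of triples, decide whether $T$ contains a perfect matching of size $q$. Since 3DM remains NP-hard when $q \ge 5$ (by a standard padding argument in which one adds five ``forced'' dummy triples linking five dummy elements in each of $X,Y,Z$), I will assume this restriction throughout.

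From such an instance I construct an \RDSCPshort instance with universe $U := X \cup Y \cup Z$, parameters $s:=0$, $d:=q$, $t:=4$, and multiset $\F$ containing: (i) for each $v \in X$, one copy of $X \setminus \{v\}$, and symmetrically for each $v \in Y$ and $v \in Z$ (collectively the \emph{dimension-type} sets); and (ii) for each triple $\tau = (x,y,z) \in T$, one copy of $N_\tau := \{x,y,z\}$ (a \emph{triple-type} set). Each $\tau = (x,y,z) \in T$ induces a natural \emph{canonical cover} $c_\tau := \{X\setminus\{x\},\, Y\setminus\{y\},\, Z\setminus\{z\},\, N_\tau\}$ of size exactly $4$, and by construction $c_\tau \subseteq \F$ iff $\tau \in T$.

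The core of the argument is a \emph{structural lemma}: for $q \ge 5$, the only covers of $U$ consisting of at most $4$ sets of $\F$ are the canonical covers $c_\tau$ for $\tau \in T$. I would prove this by case analysis on the profile $(a, b, c, e)$ counting the number of X-type, Y-type, Z-type and triple-type sets in a hypothetical cover. The key counting observation is that a triple-type set contributes at most one element to each of $X$, $Y$, $Z$, so if no dimension-type set for some coordinate is present (say $a=0$), then $X$ is covered entirely by triple-type sets, forcing $q \le e \le 4$, a contradiction. Hence $a, b, c \ge 1$, and since $a + b + c + e \le 4$, the only feasible profile is $(1,1,1,1)$; moreover, the triple-type set $N_\tau = \{x',y',z'\}$ must then coincide exactly with the three missing elements of the dimension-type sets in the cover, which forces $\tau \in T$ and recovers $c_\tau$.

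Equipped with the structural lemma, the equivalence follows easily. A perfect matching $\tau_1,\ldots,\tau_q$ of 3DM yields $q$ pairwise disjoint canonical covers $c_{\tau_1},\ldots,c_{\tau_q}$, because the first coordinates of $\tau_1,\ldots,\tau_q$ form a permutation of $X$, so each $X\setminus\{v\}$ (which has multiplicity $1$ in $\F$) is used at most once, and analogously for $Y, Z$ and the triple-type sets. Conversely, $q$ pairwise disjoint size-$\le 4$ covers must be canonical covers $c_{\tau_1},\ldots,c_{\tau_q}$ whose triples have pairwise distinct first, second and third coordinates by the same multiplicity argument, hence they form a perfect matching of $T$. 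The main obstacle is the case analysis in the structural lemma; the bound $q \ge 5$ is essentially tight, as for $q \le 4$ one can cover $U$ using $4$ triple-type sets alone, breaking the bijection between size-$\le 4$ covers and triples of $T$.
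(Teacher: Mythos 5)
Your reduction is correct, but it is a genuinely different construction from the paper's, so let me compare the two. You reduce from the \emph{perfect matching} version of \TDM over the natural universe $U = X \cup Y \cup Z$, encoding each triple directly as a $3$-element set and using the complements $X\setminus\{v\}$, $Y\setminus\{v\}$, $Z\setminus\{v\}$ as selector sets, with $d=q$; the structural lemma then needs $q\ge 5$ to kill covers built from triple-type sets alone. The paper instead reduces from the \emph{$k$ disjoint hyperedges} version, builds a universe of size $3m+4$ indexed by the hyperedges plus four special elements $u_X,u_Y,u_Z,u_*$, lets the sets $s^X_i,s^Y_i,s^Z_i$ collect the hyperedge-slots their element participates in, and uses a ``complement of hyperedge $e_j$'' set $s^*_j$ to force each size-$4$ cover to be $\{s^X_{i_1},s^Y_{i_2},s^Z_{i_3},s^*_j\}$ with $e_j=(x_{i_1},y_{i_2},z_{i_3})$; the four special elements replace your counting argument and make the construction work without any lower bound on the instance size. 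Both achieve the same key property (size-$\le 4$ covers biject with hyperedges), and your universe $3q$ is more economical than the paper's $3m+4$, at the price of the $q\ge 5$ padding. One small gap in your structural lemma as written: from $a,b,c\ge 1$ and $a+b+c+e\le 4$ you cannot immediately conclude the profile is $(1,1,1,1)$ --- profiles such as $(2,1,1,0)$ also satisfy those constraints. You must additionally rule out $e=0$: if $e=0$ then each dimension covered by a single set $X\setminus\{v\}$ misses $v$, so each of $a,b,c$ would have to be at least $2$, giving $a+b+c\ge 6>4$. This is the same style of counting you already use, so the fix is routine, but it should be stated.
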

\begin{proof}
We reduce from the \TDM problem, in which we are given three sets $X$, $Y$ and $Z$ of $n$ elements each, a set $M \subseteq X \times Y \times Z$ of hyperedges, and an integer $k$. The goal is to find $M' \subseteq M$ with $|M'| \ge k$ such that for all $e, e' \in M'$ with $e \neq e'$, $e=(x, y, z)$, $e'=(x', y', z')$, we have $x \neq x'$, $y \neq y'$ and $z \neq z'$ (in that case, we will say that these two hyperedges are \emph{disjoint}). Let $X = \{x_1, \dots, x_n\}$, $Y = \{y_1, \dots, y_n\}$ $Z = \{z_1, \dots, z_n\}$, and $M = \{e_1, \dots, e_m\}$.
We then define the following universe:
\[
 U = \{u_1^X, \dots, u_m^X\} \cup \{u_1^Y, \dots, u_m^Y\} \cup \{u_1^Z, \dots, u_m^Z\} \cup \{u_X, u_Y, u_Z, u_*\}
\]
%
 The family  $\F$ of sets is comprised of $\F_X$, $\F_Y$, $\F_Z$ and $\F^*$, where, for all $\omega \in \{X, Y, Z, *\}$, let $\F_{\omega} = \{s^{\omega}_1, \dots, s^{\omega}_n\}$. For each hyperedge $e_j = \{x_{i_1}, y_{i_2}, z_{i_3}\}$, the set $s^X_{i_1}$ (resp. $s^Y_{i_2}$, $s^Z_{i_3}$) contains $u_j^X$ (resp. $u_j^Y$, $u_j^Z$) and $u_X$ (resp. $u_Y$, $u_Z$), and the set $s^*_j$ consists of all elements but $u_j^X$, $u_j^Y$, $u_j^Z$, $u_X$, $u_Y$ and $u_Z$. 
To conclude the construction, which can be done in polynomial time, we set $d = k$ (and recall that $t=4$).

First, suppose that there exists a solution $M'$ for the \TDM problem. Without loss of generality, assume that $|M'| = k$, $M' = \{e_1, \dots, e_k\}$, and that $e_i = (x_i, y_i, z_i)$ for all $i \in [k]$ (recall that all members of $M'$ are pairwise disjoint). Then, observe that for all $i \in [k]$, the set $s_i^*$ consists of all elements but $u_i^X$, $u_i^Y$, $u_i^Z$, $u_X$, $u_Y$ and $u_Z$. However, $s^X_i$ consists of $u_i^X$ and $u_X$, set $s^Y_i$ consists of $u_i^Y$ and $u_Y$, and set $s^Z_i$ is composed of $u_i^Z$ and $u_Z$. Hence, $s^X_i \cup s^Y_i \cup s^Z_i \cup s^*_i = U$, and, since all members of $M'$ are pairwise disjoint, we thus constructed $d$ disjoint set covers of size at most $4$ each.

Conversely, suppose that there exist $\T_1, \dots, \T_d$, pairwise disjoint subfamilies of $\F$ such that for all $i \in [d]$, we have $|\T_i| = 4$ and $\cup_{X \in \T_i} X = U$. We first claim that for all $i \in [d]$, $\T_i$ intersects $\F_X$ (resp. $\F_Y$, $\F_Z$ and $\F_*$) on exactly one set.
Indeed, otherwise, since $|\T_i| = 4$ and since all sets in $\F_X$ (resp. $\F_Y$, $\F_Z$, $\F_*$) only contains $u_X$ (resp. $u_Y$, $u_Z$, $u_*$) among $\{u_X, u_Y, u_Z, u_*\}$, $\T_i$ would not be able to cover all $U$. Thus, we know that for all $i \in [d]$, we have $\T_i = \{s^X_{i_1}, s^Y_{i_2}, s^Z_{i_3}, s^*_{i_4}\}$, for some $(i_1, i_2, i_3, i_4) \in [n] \times [n] \times [n] \times [m]$. We claim that $(x_{i_1}, y_{i_2}, z_{i_3}) = e_{i_4}$. Indeed, observe that the set $s^*_{i_4}$ contains all elements but $u_{i_4}^X$, $u_{i_4}^Y$, $u_{i_4}^Z$, $u_X$, $u_Y$ and $u_Z$. By construction, the only way for $\T_i$ to cover $U$ is that set $s_{i_1}^X$ (resp. $s_{i_2}^Y$, $s_{i_3}^Z$) contains $u_{i_4}^X$ (resp. $u_{i_4}^Y$, $u_{i_4}^Z$) or, in other words, that $x_{i_1}$ (resp. $y_{i_2}$, $z_{i_3}$) belongs to hyperedge $e_{i_4}$. Thus, there exist $k$ pairwise disjoint hyperedges in $M$.
\end{proof}

\section{Closest String Problem}\label{sec:closeststring}

In the \textsc{Closest String} problem, we are given a collection of $k$ strings $s_1, \dots, s_k$ of length $L$ over a fixed alphabet $\Sigma$, and a non-negative integer $d$. The goal is to decide whether there exists a string $s$ (of length $L$) such that $d_H(s, s_i) \le d$ for all $i \in [k]$, where $d_H(s, s_i)$ denotes the Hamming distance between $s$ and $s_i$ (the number of positions in which $s$ and $s_i$ differ). If such a string exists, then it will be called a \emph{$d$-closest string}.

It is common to represent an instance of the problem as a matrix $C$ with $k$ rows and $L$ columns (\ie where each row is a string of the input); hence, in the following, the term \emph{column} will refer to a column of this matrix.
As Gramm \etal \cite{GrNiRo03} observe, as the Hamming distance is measured column-wise, one can identify some columns sharing the same structure. Let $\Sigma = \{\varphi_1, \dots, \varphi_{|\Sigma|}\}$. 
Gramm \etal show~\cite{GrNiRo03} that after a simple preprocessing of the instance, we may assume that for every column $c$ of $C$, $\varphi_i$ is the $i^{th}$ character that appears the most often (in $c$), for $i \in \{1, \dots, |\Sigma|\}$ (ties broken \wrt the considered ordering of $\Sigma$). Such a preprocessed column will be called \emph{normalized}, and by extension, a matrix consisting of normalized columns will be called \emph{normalized}.
One can observe that after this preprocessing, the number of different columns (called \emph{column type}) is bounded by a function of $k$ only, namely by the $k^{th}$ Bell number $\B_k = O(2^{k \log_2 k})$. The set of all column types is denoted by $T$.
Using this observation, Gramm \etal~\cite{GrNiRo03} prove that \textsc{Closest String} is FPT parameterized by $k$, using an ILP with a number of variables depending on $k$ only, and then applying Lenstra's theorem.


The motivation for studying resiliency with respect to this problem is the introduction of experimental errors, which may change the input strings~\cite{Pe00}. While a solution of the \textsc{Closest String} problem tests whether the input strings are consistent, a resiliency version asks whether these strings will remain consistent after some small changes. In the \textsc{Resiliency Closest String} problem we allow at most $m$ changes to appear anywhere in the matrix $C$. To represent this, we simply use the Hamming distance between two matrices. 
 
\begin{center}
\fbox{
\begin{minipage}[c]{.9\textwidth}
\textsc{Resiliency Closest String (RCS)}\\
\underline{Input:} $C$, a $k \times L$ normalized matrix of elements of $\Sigma$, $d \in \N$,
$m \leq kL$.\\
\underline{Question:} For every $C'$, $k \times L$ normalized matrix of elements of $\Sigma$ such that the Hamming distance of $C$ and $C'$ is at most $m$, does $C'$ admit a $d$-closest string?
\end{minipage}}
\end{center}


Let $\#_t$ be the number of columns of type $t$ in $C$. For two types $t, t' \in T$ let $\delta(t,t')$ be their Hamming distance. Let $z_{t,t'}$, for all $t, t' \in T$, be a variable meaning ``how many columns of type $t$ in $C$ are changed to type $t'$ in $C'$'' (we allow $t=t'$). Thus we have the following constraints:
\begin{align}
\sum_{t' \in T} z_{t,t'} &= \#_t & \forall t \in T \label{eqn:global:eq1} \\
\sum_{t, t' \in T} \delta(t,t') z_{t,t'} &\leq m \label{eqn:global:eq2}
\end{align}
These constraints clearly capture all possible scenarios of how the input strings can be modified in at most $m$ places. Then let $\#'_t$ be a variable meaning ``how many columns of $C'$ are of type $t$'', and let $x_{t,\varphi}$ represent the number of columns of type $t$ in $C'$ whose corresponding character in the solution is set to $\varphi$. Finally let $\Delta(t, \varphi)$ be the number of characters of $t$ which are different from $\varphi$. As the remaining constraints correspond to our formulation of \textsc{ILP Resiliency}, we have:
\begin{align}
\sum_{t \in T} z_{t,t'} &= \#'_{t'} & \forall t' \in T \label{eqn:global:eq3}\\
\sum_{\varphi \in \Sigma} x_{t, \varphi} &=  \#'_t & \forall t \in T \label{eqn:global:eq4}\\
\sum_{t \in T} \sum_{\varphi \in \Sigma } \Delta(t, \varphi) x_{t, \varphi} &\le  d & \label{eqn:global:eq5}
\end{align}
This is the standard ILP for \textsc{Closest String}~\cite{GrNiRo03}, except that $\#'_t$ are now variables, and there exists a solution $x$ exactly when there is a string at distance at most $d$ from the modified strings given by the variables $\#'$. 
Let $\L$ denote the ILP composed of constraints (\ref{eqn:global:eq1}), (\ref{eqn:global:eq2}), (\ref{eqn:global:eq3}), (\ref{eqn:global:eq4}) and (\ref{eqn:global:eq5}). Finally, let $\cal Z$ denote variables $z_{t, t'}$ and $\#'_t$ for every $t, t' \in T$.
\begin{lemma}\label{lemma:grcs}
The \textsc{Resiliency Closest String} instance is satisfiable if and only if $\L$ is $\cal Z$-resilient.
\end{lemma}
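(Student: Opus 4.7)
The proof establishes the equivalence via a two-level correspondence. The outer level links integral $\cal Z$-assignments $\sigma$ satisfying $\L_{\cal Z}$ (constraints (\ref{eqn:global:eq1}), (\ref{eqn:global:eq2}), (\ref{eqn:global:eq3})) with normalized matrices $C'$ at Hamming distance at most $m$ from $C$. The inner level, given such a $C'$, relates integral $x$-assignments satisfying (\ref{eqn:global:eq4}) and (\ref{eqn:global:eq5}) with $d$-closest strings of $C'$; this is precisely the standard Gramm \etal ILP encoding of \textsc{Closest String}, so the bulk of the argument concerns the outer level.

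For the forward direction (RCS satisfiable $\Rightarrow$ $\L$ is $\cal Z$-resilient), take any integral $\sigma$ satisfying $\L_{\cal Z}$. I build a witness $C'$ by processing each column type $t$: partition the $\#_t$ columns of type $t$ in $C$ into groups of sizes $\sigma(z_{t,t'})$ (valid by (\ref{eqn:global:eq1})) and, for each such group, apply the character changes that send the columns to type $t'$ at minimum cost $\delta(t,t')$ per column. Constraint (\ref{eqn:global:eq2}) then guarantees $d_H(C,C') \leq m$, and (\ref{eqn:global:eq3}) guarantees $\sigma(\#'_{t'})$ equals the number of type-$t'$ columns of $C'$. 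Since RCS is satisfiable, $C'$ admits a $d$-closest string $s$; setting $\sigma(x_{t,\varphi})$ equal to the number of type-$t$ columns of $C'$ at which $s$ places $\varphi$ makes (\ref{eqn:global:eq4}) hold by construction and (\ref{eqn:global:eq5}) hold because $s$ is a $d$-closest string.

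For the backward direction ($\L$ is $\cal Z$-resilient $\Rightarrow$ RCS satisfiable), let $C'$ be any normalized matrix within Hamming distance $m$ of $C$. Define $\sigma(z_{t,t'})$ as the number of column positions whose column has type $t$ in $C$ and type $t'$ in $C'$, and $\sigma(\#'_t)$ as the count of type-$t$ columns in $C'$. Constraints (\ref{eqn:global:eq1}) and (\ref{eqn:global:eq3}) are immediate; (\ref{eqn:global:eq2}) follows because any two columns of types $t$ and $t'$ differ in at least $\delta(t,t')$ positions, so $\sum_{t,t'} \delta(t,t') \sigma(z_{t,t'}) \leq d_H(C,C') \leq m$. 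Resiliency then yields an integral $x$-assignment satisfying (\ref{eqn:global:eq4}) and (\ref{eqn:global:eq5}); building $s$ by assigning $\sigma(x_{t,\varphi})$ of the type-$t$ columns of $C'$ the character $\varphi$ (feasible by (\ref{eqn:global:eq4})) gives, via (\ref{eqn:global:eq5}), a $d$-closest string of $C'$, as required.

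The main technical subtlety is that $\L$ operates at the level of column-type multisets, whereas RCS ranges over concrete matrices $C'$. Two concrete matrices with the same column-type multiset are indistinguishable to $\L$, and ``admits a $d$-closest string'' is invariant under per-column renaming of $\Sigma$; this lets the forward direction pick any normalized representative $C'$ realizing the prescribed multiset, and the backward direction abstract any given $C'$ to its column-type signature. Care must also be taken that a column of type $t$ in $C$ can indeed be transformed into a column of type $t'$ at cost exactly $\delta(t,t')$, which is how $\delta$ is defined, so the partitioning step in the forward direction is always realizable.
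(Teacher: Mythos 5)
Your proposal is correct and follows essentially the same route as the paper's proof: construct $C'$ from a $\cal Z$-assignment by retyping $\sigma(z_{t,t'})$ columns of type $t$ into type $t'$ in one direction, and read off $\sigma(z_{t,t'})$ from a given $C'$ in the other, with the $x$-variables handled exactly as in the standard Gramm et al.\ encoding. The extra justification you give for constraint~(\ref{eqn:global:eq2}) in the backward direction, and the remark that $\L$ only sees column-type multisets, are slightly more explicit than the paper's wording but do not change the argument.
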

\begin{proof}

Constraints involving variables $\cal Z$ are (\ref{eqn:global:eq1}), (\ref{eqn:global:eq2}) and (\ref{eqn:global:eq3}).
Suppose first that the instance is satisfiable, and let $\sigma_{\cal Z}$ be an integral assignment for $\cal Z$. We construct $C'$ from $C$ by turning, in an arbitrary way, $\sigma_{\cal Z}(z_{t, t'})$ columns of type $t$ to columns of type $t'$. By constraints (\ref{eqn:global:eq1}), $C'$ is well-defined, and by constraints (\ref{eqn:global:eq2}), the Hamming distance between $C$ and $C'$ is at most $m$. Then, by constraint (\ref{eqn:global:eq3}), matrix $C'$ contains $\sigma_{\cal Z}(\#'_t)$ columns of type $t$, for every $t \in T$. Since the instance is satisfiable, there exists a $d$-closest string $s$ of $C'$. For $t \in T$ and $\varphi \in \Sigma$, define $\sigma_x(x_{t, \varphi})$ as the number of columns of type $t$ in $C'$ whose corresponding character in $s$ is $\varphi$. Since, as we said previously, $C'$ has exactly $\sigma_{\cal Z}(\#'_t)$ columns of type $t$, constraint (\ref{eqn:global:eq4}) is satisfied for every $t \in T$. Then, since $s$ is a $d$-closest string for $C'$, constraint (\ref{eqn:global:eq5}) is also satisfied.

Conversely, suppose that $\L$ is $\cal Z$-resilient, and let us consider $C'$, a $k \times L$ normalized matrix of elements of $\Sigma$ such that the Hamming distance of $C$ and $C'$ is at most $m$. In polynomial time, we construct $\sigma_{\cal Z}(z_{t, t'})$ for every $t, t' \in T$ such that (\ref{eqn:global:eq1}) and (\ref{eqn:global:eq3}) are satisfied. By definition of $C'$, constraint (\ref{eqn:global:eq2}) is satisfied. Thus, there exists an integral assignment $\sigma_x$ satisfying (\ref{eqn:global:eq4}) and (\ref{eqn:global:eq5}). We now construct $s$ as a string having, for every column type $t \in T$ in $C'$, $\sigma_x(x_{t, \varphi})$ occurrence(s)(s) of character $\varphi$, for every $\varphi \in \Sigma$ (columns chosen arbitrarily among those of type $t$ in $C'$). Because of constraint (\ref{eqn:global:eq4}), and since $\#'_t$ is the number of columns of type $t$ in $C'$, $s$ is well-defined. Finally, observe that constraint (\ref{eqn:global:eq5}) ensures that $s$ is a $d$-closest string of $C'$, which concludes the proof.
\end{proof}

It remains to observe that for the above system of constraints $\cal L$, $\kappa({\cal L})$ is bounded by a function of $k$ (since $|T| = O(2^{k \log_2 k})$. We thus obtain the following result:

\begin{theorem}
\textsc{RCS} is $FPT$ parameterized  by $k$.
\end{theorem}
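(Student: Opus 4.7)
The plan is to apply Theorem \ref{thm:ILPres} to the ILP $\mathcal{L}$ built from constraints (\ref{eqn:global:eq1})--(\ref{eqn:global:eq5}), with Lemma \ref{lemma:grcs} acting as the bridge. By that lemma, the \textsc{RCS} instance is a yes-instance if and only if $\mathcal{L}$ is $\mathcal{Z}$-resilient, so it suffices to verify the two hypotheses of Theorem \ref{thm:ILPres} for $\mathcal{L}$: (i) $\kappa(\mathcal{L})$ is bounded by a computable function of $k$, and (ii) the entries of the coefficient matrices $A$ and $C$ (that is, the coefficients of the $x$-variables in the $x$-only and mixed blocks) can be supplied in unary.

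For (i), I would invoke the normalization preprocessing of \cite{GrNiRo03}: the number of column types $|T|$ is bounded by the $k$-th Bell number $\mathcal{B}_k = O(2^{k \log_2 k})$, and since each column uses at most $k$ distinct characters we may assume $|\Sigma| \le k$ without loss of generality. Hence the number of $x$-variables is $|T|\cdot|\Sigma|$, the number of $\mathcal{Z}$-variables ($z_{t,t'}$ together with $\#'_t$) is $|T|^2 + |T|$, and the number of inequalities in the $A$/$C$ blocks is $|T|+1$, coming from (\ref{eqn:global:eq4}) and (\ref{eqn:global:eq5}). Each of these is bounded by a computable function of $k$ alone, hence so is $\kappa(\mathcal{L})$.

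For (ii), I would match the constraints to the template of Section~\ref{sec:ILPres}: constraints (\ref{eqn:global:eq1}), (\ref{eqn:global:eq2}) and (\ref{eqn:global:eq3}) involve only $\mathcal{Z}$-variables and go into the block $Fz\le g$; constraint (\ref{eqn:global:eq4}) is mixed and contributes to $Cx + Dz \le e$, where every $x$-coefficient equals $1$; and constraint (\ref{eqn:global:eq5}) is purely in $x$ and contributes to $Ax \le b$, with coefficient $\Delta(t,\varphi) \in \{0,1,\ldots,k\}$ on $x_{t,\varphi}$. Thus every entry of $A$ and $C$ lies in $\{0,1,\ldots,k\}$ and can be encoded in unary using size polynomial in $k$. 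Invoking Theorem \ref{thm:ILPres} then yields the desired FPT algorithm.

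I do not foresee a real obstacle; the argument is essentially a bookkeeping check. The only subtlety to track is that the potentially larger coefficients $\delta(t,t') \le k$ appearing in (\ref{eqn:global:eq2}) sit inside the matrix $F$, not inside $A$ or $C$, which is harmless because Theorem \ref{thm:ILPres} places no unary restriction on $F$ or $D$.
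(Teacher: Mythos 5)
Your proposal is correct and follows essentially the same route as the paper: apply Lemma \ref{lemma:grcs} to reduce \textsc{RCS} to $\mathcal{Z}$-resiliency of $\mathcal{L}$ and then invoke Theorem \ref{thm:ILPres}, noting that $\kappa(\mathcal{L})$ is bounded by a function of $k$ since $|T| = O(2^{k\log_2 k})$. In fact you are somewhat more careful than the paper's one-line argument, since you also explicitly verify the unary condition on the entries of $A$ and $C$ (coefficients in $\{0,1,\dots,k\}$) and correctly observe that the $\delta(t,t')$ coefficients land in $F$, where no such restriction is needed.
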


\section{Resilient Scheduling}\label{sec:scheduling}

A fundamental scheduling problem is makespan minimization on unrelated machines, where we have $m$ machines and $n$ jobs, and each job has a vector of processing times with respect to machines $p_j = (p_j^1, \dots, p_j^m)$, $j\in [n]$. If the vectors $p_j$ and $p_{j'}$ are identical for two jobs $j, j'$, we say these jobs are of the same \textit{type}. Here we consider the case when $m$ and the number of types $\theta$ are parameters and the input is given as $\theta$ numbers $n_1, \dots, n_\theta$ of job multiplicities. A \textit{schedule} is an assignment of jobs to machines. For a particular schedule, let $n_t^i$ be the number of jobs of type $t$ assigned to machine $i$. Then, the \textit{completion time} of machine $i$ is $C^i = \sum_{t \in [\theta]} p_t^i n_t^i$ and the largest $C^i$ is the \textit{makespan} of the schedule, denoted $C_{max}$. 

The parameterization by $\theta$ and $m$ might seem very restrictive, but note that when $m$ alone is a parameter, the problem is W[1]-hard even when the machines are identical (i.e., copies of the same machine) and the job lengths are given in unary~\cite{JansenKMS13}. Also, Asahiro et al.~\cite{AsahiroJMOZ07} show that it is strongly NP-hard already for \textit{restricted assignment} when there is a number $p_j$ for each job such that for each machine $i$, $p_j^i \in \{p_j, \infty\}$ and all $p_j \in \{1,2\}$ and for every job there are exactly two machines where it can run. Mnich and Wiese~\cite{MnichW14} proved that the problem is FPT with parameters $\theta$ and $m$.

%
%
%

A natural way to introduce resiliency is when we consider unexpected delays due to repairs, fixing software bugs, etc., but we have an upper bound $K$ on the total expected downtime. We assume that the execution of jobs can be resumed after the machine becomes available again, but cannot be moved to another machine, that is, we assume preemption but not migration. Under these assumptions it does not matter when specifically the downtime happens, only the total downtime of each machine. Given $m$ machines, $n$ jobs and $C_{max}, K \in \N$, we say that a scheduling instance has a \emph{$K$-tolerant makespan $C_{max}$} if, for every $d_1, \dots, d_m \in \N$ such that $\sum_{i=1}^m d_i \le K$, there exists a schedule where each machine $i \in [m]$ finishes by the time $C_{max}-d_i$. 
We obtain the following problem:

\begin{center}
\fbox{
\begin{minipage}[c]{.9\textwidth}
\textsc{Resiliency Makespan Minimization on Unrelated Machines}\\
\underline{Input:} $m$ machines, $\theta$ job types $p_1, \dots, p_\theta \in \N^m$, job multiplicities $n_1, \dots, n_\theta$, and $K, C_{max} \in \N$.\\
\underline{Question:} Does this instance have a $K$-tolerant makespan $C_{max}$ ?
\end{minipage}}
\end{center}

Let $x_t^i$ be a variable expressing how many jobs of type $t$ are scheduled to machine $i$. We have the following constraints, with the first constraint describing the feasible set of delays, and the subsequent constraints assuring that every job is scheduled on some machine and that every machine finishes by time $C_{max} - d_i$:
\begin{align*}
\sum_{i=1}^m d_i &\leq K & \\
\sum_{i=1}^m x_i^t &= n_t & \forall t \in [\theta] \\
\sum_{t=1}^\theta x_t^i p_t^i &\leq C_{max} - d_i & \forall i \in [m] 
\end{align*}
Theorem \ref{thm:ILPres} and the system of constraints above implies the following result related to the above-mentioned result of Mnich and Wiese~\cite{MnichW14}.

\begin{theorem}\label{thm:sch}
\textsc{Resiliency Makespan Minimization on Unrelated Machines} is FPT when parameterized by $\theta$, $m$ and $K$ and with $\max_{t\in [\theta],i\in [m]} p_i^t \le N$ for some number $N$ given in unary.
\end{theorem}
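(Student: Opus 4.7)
The plan is to apply Theorem~\ref{thm:ILPres} directly to the ILP $\mathcal{R}$ displayed immediately before the statement. First I would identify the correspondence with the framework of Section~\ref{sec:ILPres}: the vector $x$ consists of the $\theta m$ job-assignment variables $x_t^i$, while the resiliency vector $z$ consists of the $m$ delay variables $d_1,\dots,d_m$. The assignment constraints $\sum_{i=1}^m x_t^i = n_t$ (rewritten as two $\le$ inequalities each) realise $Ax \le b$; the load constraints $\sum_{t=1}^\theta p_t^i x_t^i + d_i \le C_{max}$ realise $Cx + Dz \le e$; and the budget constraint $\sum_{i=1}^m d_i \le K$, together with the implicit lower bounds $d_i \ge 0$, realises $Fz \le g$.

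Next I would argue that the property ``the instance admits a $K$-tolerant makespan $C_{max}$'' is verbatim the $z$-resiliency of $\mathcal{R}$. An integer point $d$ satisfying $\mathcal{R}_z$ is precisely a non-negative integer delay vector with $\sum_i d_i \le K$; for any such $d$, the existence of an integer $x$ satisfying the remaining inequalities is equivalent to the existence of a schedule in which every job is assigned to some machine and each machine $i$ finishes by time $C_{max} - d_i$. Hence the answer to the scheduling question coincides with the answer of \textsc{ILP Resiliency} on $\mathcal{R}$.

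Finally, I would verify the hypotheses of Theorem~\ref{thm:ILPres}. The measure $\kappa(\mathcal{R}) = \theta m + m + (2\theta + m)$ is bounded by a function of $\theta$ and $m$ alone; matrix $A$ has entries in $\{-1,0,+1\}$, and matrix $C$ has entries in $\{0,1,\dots,N\}$, both in unary since $N$ is given in unary by hypothesis. Invoking Theorem~\ref{thm:ILPres} then yields an FPT algorithm whose parameter is $\theta+m$, which is also FPT in the slightly larger parameter $\theta+m+K$ claimed in the statement. I do not anticipate a substantive obstacle: the only bookkeeping is splitting each equality into two $\le$ inequalities and explicitly including the non-negativity constraints $d_i \ge 0$ in $\mathcal{R}_z$ so that its feasible set matches exactly the admissible delay vectors, after which the result is a direct application of Theorem~\ref{thm:ILPres}.
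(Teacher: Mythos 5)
Your proposal is correct and follows essentially the same route as the paper: cast the displayed ILP into the $(A,C,D,F)$ format, observe that $z$-resiliency in the delay variables is verbatim the $K$-tolerance property, and apply Theorem~\ref{thm:ILPres} using the unary bound $N$ on the processing times. Your remark that $\kappa(\mathcal{R})$ depends only on $\theta$ and $m$ (since $K$ appears only in the right-hand side $g$) is accurate and in fact slightly sharper than the paper's statement, which conservatively includes $K$ in the parameter.
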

\begin{proof}
We recall the following ILP, denoted by $\L$:
\begin{align}
\sum_{i=1}^m d_i &\leq K & \label{eqn:sched:d} \\
\sum_{i=1}^m x_i^t &= n_t & \forall t \in [\theta] \label{eqn:sched:x}\\
\sum_{t=1}^\theta x_t^i p_t^i &\leq C_{max} - d_i & \forall i \in [m]  \label{eqn:sched:makespan}
\end{align}
We prove that the instance is satisfiable (\ie has a $K$-tolerant makespan $C_{max}$) if and only if $\L$ is $d$-resilient.
Suppose first that the instance is satisfiable, and let $\sigma_d$ be an integral assignment of variables $d_i$ satisfying constraint (\ref{eqn:sched:d}), that is, we have a scenario of delays $\sigma_d(d_1), \dots, \sigma_d(d_m)$ with total delay at most $K$. Thus, there exists a schedule where each machine $i \in [m]$ finished by time $C_{max}$ with expected delay $\sigma_d(d_i)$. By defining, for every machine $i \in [m]$ and every type $t \in [\theta]$, $\sigma_x(x_i^t)$ to be the number of jobs of type $t$ assigned to machine $i$, we obtain an integral assignment for variables $x_i^t$ satisfying constraints (\ref{eqn:sched:x}) and (\ref{eqn:sched:makespan}). That is, $\L$ is $d$-resilient.

Conversely, suppose that $\L$ is $d$-resilient, and let us consider a scenario of delays $d_1, \dots, d_m$ with total delay at most $K$, or, equivalently, an integral assignment $\sigma_d$ of variables $d$ satisfying constraint (\ref{eqn:sched:d}). Since $\L$ is $d$-resilient, there exists an assignment $\sigma_x$ of variables $x_i^t$ satisfying (\ref{eqn:sched:x}) and (\ref{eqn:sched:makespan}). Using the same arguments as above, there exists a schedule where each machine $i \in [m]$ finishes by time $C_{max}-d_i$.

Finally, observe that $\kappa(\L)$ is bounded by a function of $\theta$, $m$ and $K$ only.
\end{proof}

\section{Resilient Swap Bribery}\label{sec:bribery}

The field of computational social choice is concerned with computational problems associated with voting in elections. \textsc{Swap Bribery}, where the goal is to find the cheapest way to bribe voters such that a preferred candidate wins, has received considerable attention.  This problem models not only actual bribery, but also processes designed to influence voting (such as campaigning). It is natural to consider the case where an adversarial counterparty first performs their bribery, where we only have an estimate on their budget. The question becomes whether, for each such bribery, it is possible, within a given budget, to bribe the election such that our preferred candidate still wins.
The number of candidates is a well studied parameter~\cite{BredereckEtAl2015,DornSchlotter2012}. In this section we will show that the resilient version of \textsc{Swap Bribery} with unit costs (unit costs are a common setting, cf. Dorn and Schlotter~\cite{DornSchlotter2012}) is FPT using our framework. Let us now give formal definitions.

\noindent
\textbf{Elections.}
An election~$E = (C,V)$ consists of a set $C$ of $m$ candidates~$c_1,\ldots,c_m$ and a set~$V$ of voters (or votes).
Each voter $i$ is a linear order~$\pref_{i}$ over the set~$C$.
For distinct candidates~$a$ and $b$, we write $a\pref_i b$ if voter~$i$ prefers~$a$ over $b$.
We denote by $\textrm{rank}(c,i)$ the position of candidate~$c \in C$ in the order~$\pref_i$.
The preferred candidate is $c_1$.

\noindent
\textbf{Swaps.}
Let $(C,V)$ be an election and let $i \in V$ be a voter.
A \emph{swap} $\gamma = (a,b)_i$ in preference order~$\pref_i$ means to exchange the positions of $a$ and $b$ in $\pref_i$; denote the resulting order by $\pref_i^{\gamma}$; the \emph{cost} of $(a,b)_i$ is $\pi_i(a,b)$ (in the problem studied in this paper, we have $\pi_i(a, b) = 1$ for every voter $i$ and candidates $a, b$).
A swap~$\gamma=(a,b)_i$ is \emph{admissible in $\pref_i$} if $\rank(a,i) = \rank(b,i)-1$. 
A set $\Gamma$ of swaps is \emph{admissible in $\pref_i$} if they can be applied sequentially in~$\pref_i$, one after the other, in some order, such that each one of them is admissible.
Note that the obtained vote, denoted by $\pref_i^{\Gamma}$, is independent from the order in which the swaps of $\Gamma$ are applied.
We also extend this notation for applying swaps in several votes and denote it $V^\Gamma$.

\noindent
\textbf{Voting rules.}
A voting rule~$\mathcal{R}$ is a function that maps an election to a subset of candidates, the set of winners. We will show our example for rules which are scoring protocols, but following the framework of so-called ``election systems described by linear inequalities''~\cite{DornSchlotter2012} it is easily seen that the result below holds for many other voting rules. With a scoring protocol $s=(s_1, \dots, s_m) \in \N^m$, a voter $i$ gives $s_1$ points to his most preferred candidate, $s_2$ points to his second most preferred candidate and so on. The candidate with most points wins.

\begin{center}
\fbox{
\begin{minipage}[c]{.9\textwidth}
\textsc{Resiliency Unit Swap Bribery}\\
\underline{Input:} An election $E=(C,V)$ with each swap of unit cost and with a scoring protocol $s \in \N^m$, the adversary's budget $B_a$, our budget $B$.\\
\underline{Question:} For every adversarial bribery $\Gamma_a$ of cost at most $B_a$, is there a bribery $\Gamma$ of cost at most $B$ such that $E=(C,(V^{\Gamma_a})^\Gamma)$ is won by $c_1$?
\end{minipage}}
\end{center}

\begin{theorem}\label{thm:bribe}
\textsc{Resiliency Unit Swap Bribery} with a scoring protocol is FPT when parameterized by the number of candidates $m$.
\end{theorem}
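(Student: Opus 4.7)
The plan is to apply Theorem~\ref{thm:ILPres} to a compact ILP modelling the problem, using the classical observation that voters with identical preference orders are interchangeable and that there are at most $m!$ distinct linear orders over the candidate set $C$. Let $\mathcal{O}$ denote this set, and for $o \in \mathcal{O}$ let $n_o$ be the number of voters in $V$ holding order $o$. I would rely on the classical identity that the Kendall-tau distance $\kappa(o,o')$ between two orders coincides with the minimum number of admissible unit-cost swaps needed to transform $o$ into $o'$, is precomputable from $o,o'$ alone, and is bounded by $\binom{m}{2}$; moreover, any minimum such swap sequence realizes this transformation explicitly.

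For each pair $(o,o') \in \mathcal{O}^2$ I would introduce a resiliency variable $z_{o,o'}$ counting voters whose order the adversary transforms from $o$ to $o'$, together with a main variable $x_{o,o'}$ for the analogous count under our subsequent bribery (with $z_{o,o}$ and $x_{o,o}$ counting untouched voters). The ILP $\L$ would consist of non-negativity constraints together with:
\begin{align*}
\sum_{o' \in \mathcal{O}} z_{o,o'} &= n_o & \forall o \in \mathcal{O} \\
\sum_{o,o' \in \mathcal{O}} \kappa(o,o')\, z_{o,o'} &\le B_a & \\
\sum_{o' \in \mathcal{O}} x_{o,o'} - \sum_{o'' \in \mathcal{O}} z_{o'',o} &= 0 & \forall o \in \mathcal{O} \\
\sum_{o,o' \in \mathcal{O}} \kappa(o,o')\, x_{o,o'} &\le B & \\
\sum_{o \in \mathcal{O}} \left(s_{\rank(c_1, o)} - s_{\rank(c_j, o)}\right) \sum_{o' \in \mathcal{O}} x_{o',o} &\ge 0 & \forall j \in \{2,\dots,m\}
\end{align*}
The first two constraints involve only $z$, the last two only $x$, and the middle line is the coupling, matching the template~(\ref{eqn:var-x})--(\ref{eqn:var-z}) of \textsc{ILP Resiliency}.

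The central lemma to establish is that $\L$ is $z$-resilient if and only if the instance is positive. To go from ``positive instance'' to ``$z$-resilient $\L$'', given any integer assignment to $z$ satisfying the $z$-only constraints I would realize it as an explicit adversary bribery $\Gamma_a$ by selecting $z_{o,o'}$ voters currently of order $o$ and applying to each a sequence of $\kappa(o,o')$ admissible swaps turning them into order $o'$; the total cost is at most $B_a$ by the budget constraint. A winning response $\Gamma$ then exists by hypothesis, and taking $x_{o,o'}$ to be its induced counts yields values satisfying the remaining constraints, the last block of which exactly encodes that $c_1$'s scoring-protocol score matches or exceeds that of every competitor. The converse direction is symmetric: from an arbitrary $\Gamma_a$ I would extract $z$ by counting, apply $z$-resiliency to obtain $x$, and realize $x$ as a concrete response bribery via the same swap-sequence construction.

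Finally, the ILP has $O((m!)^2)$ variables and $O(m!+m)$ constraints, and the coefficients appearing in the $A$ and $C$ matrices of the \textsc{ILP Resiliency} template are either $\pm 1$, Kendall-tau distances bounded by $\binom{m}{2}$, or linear combinations of scoring-protocol entries $s_i$, which I take to be given in unary as is standard in the parameterized voting literature. Hence $\kappa(\L)$ is bounded by a function of $m$ alone and the unary requirement of Theorem~\ref{thm:ILPres} is satisfied, yielding the claimed FPT algorithm. The main technical delicacy to keep an eye on is precisely the classical identity between the Kendall-tau distance and the minimum admissible-swap count: it is what allows the count-based ILP to faithfully encode the swap-based semantics in both directions of the equivalence, and without it neither the ``realize $z$ as a $\Gamma_a$'' nor the ``realize $x$ as a $\Gamma$'' step would remain tight.
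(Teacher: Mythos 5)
Your proposal follows essentially the same route as the paper: voter types indexed by the $m!$ linear orders, resiliency variables $z$ counting the adversary's type-to-type moves, main variables $x$ for the response bribery, unit-swap costs given by the number of inverted pairs (your Kendall-tau $\kappa(o,o')$ is exactly the paper's $\pi(i,j)$, justified by the same uniqueness of the admissible swap set), and an application of Theorem~\ref{thm:ILPres} after noting that $\kappa(\L)$ depends on $m$ alone. The only differences are cosmetic --- you substitute out the intermediate tally variables $y$ and $w$, you make the unary assumption on the scoring vector explicit, and you use a weak rather than strict inequality in the winner constraint (co-winner vs.\ unique winner), none of which changes the argument.
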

\begin{proof} 
A standard way of looking at an election when the number of candidates $m$ is a parameter is as given by \textit{multiplicities} of \textit{voter types}: there are at most $m!$ total orders on $C$, so we count them and output numbers $n_1, \dots, n_{m!}$. Observe that for two orders $\pref,\pref'$, the admissible set of swaps $\Gamma$ such that $\pref' = \pref^{\Gamma}$ is uniquely given as the set of pairs $(c_i, c_j)$ for which either $c_i \pref c_j \wedge c_j \pref' c_i$ or $c_j \pref c_i \wedge c_i \pref' c_j$ (cf.~\cite[Proposition 3.2]{ElkindEtAl2009}). Thus it is possible to define the price $\pi(i,j)$, for $i,j \in [m!]$, of bribing a voter of type $i$ to become of type $j$ (since every swap is of unit cost, it does not depend on the users). Moreover, we can extend our notation $\rank(a,i)$ to denote the position of $a$ in the order of type $i$.

Similarly to our \textsc{Global Resiliency Closest String} approach, let $z_{ij}$, for all $i, j \in [m!]$, be a variable representing the number of voters of type $i$ bribed to become of type $j$, and let $y_i$, $i \in [m!]$, represent the election $E=(C,V^{\Gamma_a})$ after the first bribery. These constraints describe all possible adversarial briberies:

\begin{align*}
\sum_{j=1}^{m!} z_{ij} &= n_i & \forall i \in [m!] \\
\sum_{i=1}^{m!} z_{ij} &= y_j & \forall j \in [m!] \\
\sum_{i=1}^{m!} \sum_{j=1}^n \pi(i,j) z_{ij} &\leq B_a &
\end{align*}

The rest of the ILP is standard; variables $x_{ij}$ will describe the second bribery in the same way as $z_{ij}$ and variables $w$ will describe the election after this bribery, on which we will impose a constraint which is satisfied when $c_1$ is a winner:

\begin{align*}
\sum_{j=1}^{m!} x_{ij} &= y_i & \forall i \in [m!] \\
\sum_{i=1}^{m!} x_{ij} &= w_j & \forall j \in [m!] \\
\sum_{i=1}^{m!} \sum_{j=1}^{m!} \pi(i,j) x_{ij} &\leq B & \\
\sum_{k=1}^m \sum_{i: \rank(c_1,i)=k} w_i s_k &> \sum_{k=1}^m \sum_{i: \rank(c_j,i)=k} w_i s_k & \forall j=2, \dots, m
\end{align*}
The rest of the proof is similar to the proof of Lemma \ref{lemma:grcs}.
\end{proof}

\section{Discussion and Open Problems}\label{sec:dis}

For some time, Lenstra's theorem was the only approach in parameterized algorithms and complexity based on integer programming. Recently other tools based on integer programming have been introduced: the use of Graver bases for the $n$-fold integer programming problem~\cite{HeOnRo13}, the use of ILP approaches in kernelization \cite{EtKrMnRo17}, or, conversely, kernelization results for testing ILP feasibility \cite{JaKr15}, and an integer quadratic programming analog of Lenstra's theorem \cite{Lok15}. Our approach is a new addition to this powerful arsenal.

However, there still remain powerful tools from the theory of integer programming which, surprisingly, have not found applications in the design of parameterized algorithms.
For one example take a result of Hemmecke, K{\"o}ppe and Weismantel~\cite{HemmeckeKW:14} about \emph{2-stage stochastic integer programming with $n$ scenarios}.
They describe an FPT algorithm (which builds on a deep structural insight) for solving integer programs with a certain block structure and bounded coefficients.
For another example see the recent work of Nguyen and Pak~\cite{NguyenP:2017} which generalizes the problem solved by Eisenbrand and Shmonin.
They study the complexity of deciding \emph{short Presburger sentences} of the form $\forall x_1 \in P_1 \exists x_2 \in P_2 \cdots \exists x_k \in P_k: A(x_1, \dots, x_k) \leq b,$ where $P_1, \dots, P_k$ are polyhedra, $A$ is an integer matrix, $b$ is an integer vector and $x_1, \dots, x_k$ are required to be integer vectors.
A close reading of their result reveals that if $A$ and $b$ are given in unary, their algorithm is FPT for parameters $k$, the sum of dimensions of $x_1, \dots, x_k$, and the number of rows of $A$.

Another important research direction is the optimality program in parameterized algorithms pioneered by Daniel Marx.
Many of the prototypical uses of Lenstra's algorithm lead to FPT algorithms which have a double-exponential (i.e., $2^{2^{k^{O(1)}}}$) dependency on the parameter, such as the algorithms for \textsc{Closest String}~\cite{GrNiRo03} or \textsc{Swap Bribery}~\cite{DornSchlotter2012}.
Very recently, Knop et al.~\cite{KnopKM:2017} showed that many of these ILP formulations have a particular format which is solvable exponentially faster than by Lenstra's algorithm, thus bringing down the dependency on the parameter down to single-exponential.
This leads us to wonder what is the true complexity of, e.g., \textsc{Resiliency Closest String}?
All we can say is that the complexity of our algorithm is at best double-exponential but probably worse (depending on the complexity of parametric ILP in fixed dimension). Is this the best possible, or does a single-exponential algorithm exist?



\bibliographystyle{plain}

\bibliography{refs}


\end{document}